\documentclass{fundam}

\publyear{22}
\papernumber{2102}
\volume{185}
\issue{1}

\usepackage{algpseudocode}

\usepackage{url} 
\usepackage{graphicx}
\usepackage{tikz}
\usetikzlibrary{automata, positioning, arrows}

\begin{document}

	\title{Analyzing Network Robustness via Residual Closeness}
	
	\address{Dokuz Eylul University, hande.tuncel@deu.edu.tr}
	
	\author{Hande Tuncel Golpek\\
		Maritime Faculty \\
		Dokuz Eylul University, Izmir, Turkiye\\
		hande.tuncel@deu.edu.tr
		\and Mehmet Ali Bilici\\
		Computer Engineering Department\\
		Ege University,
		Izmir, Turkiye\\ mehmet.ali.bilici@ege.edu.tr \and Aysun Aytac\\
		Department of Mathematics\\
		Ege University, 
		Izmir, Turkiye\\ aysun.aytac@ege.edu.tr} 
	
	\maketitle
	
	\runninghead{H. Tuncel Golpek, M.A. Bilici, A. Aytac}{Analyzing Network Robustness...}
	
	\begin{abstract}
		Networks are inherently vulnerable to vertex failures, making the analysis of their structural robustness a fundamental problem in graph theory. In this study, we investigate the closeness and vertex residual closeness of graphs, with a particular focus on the middle graph representations of certain special graph classes, which provide a richer structural framework for analysis.
		We derive exact expressions for the closeness values of these middle graphs and determine their residual closeness under vertex failures. By utilizing results obtained from specific graph families, we establish several general bounds for broader graph classes. Furthermore, by exploiting the relationship between the closeness of a graph, its line graphs and middle graphs, we obtain new results that relate these three structures. In addition, we propose an algorithm for computing closeness in middle graphs and provide a detailed analysis of its performance.
			\end{abstract}
	
	\begin{keywords}
		network vulnerability; network design and communication; residual closeness
	\end{keywords}

\section*{Introduction}
Closeness centrality is a fundamental concept in network analysis, measuring how efficiently a vertex can communicate with all other vertices in a graph. It has been widely used in various applications such as social networks, communication systems, and transportation networks. The classical definition of closeness centrality of a vertex is based on the inverse of the sum of its distances to all other vertices in the graph.

Classical graph-theoretic measures—such as connectivity, toughness, and integrity \cite{Chvatal,Barefoot,Harary2}—have traditionally been used to evaluate network vulnerability. However, these metrics often fall short when it comes to capturing partial degradations, particularly in cases where a network remains technically connected but suffers from impaired performance.

In this context, closeness centrality has emerged as an important alternative, as it quantifies how efficiently a vertex can interact with the rest of the network through shortest-path distances. Nevertheless, the classical definition of closeness is inherently limited to connected graphs and may fail to provide meaningful information in the presence of disconnected components \cite{Freeman}. Recognizing this limitation, Latora and Marchiori introduced the concept of efficiency \cite{Latora}, broadening the applicability of closeness to disconnected or degraded networks.

Subsequently, Dangalchev proposed a modified version of closeness that resolves interpretation issues in disconnected graphs and provides a smoother degradation response to failures \cite{Dangalchev}. Building on this formulation, he further introduced residual closeness, a more sensitive and practical metric that quantifies how the failure of nodes or links affects the efficiency of communication paths across the network—even when the network remains connected.

From a design and optimization perspective, residual closeness enables the identification of structurally critical components—nodes or edges—whose failure would lead to significant declines in overall communication efficiency, even if the network remains formally connected. This insight is especially valuable in real-world applications where resilient, cost-effective, and high-performance architectures must balance robustness with resource constraints.

The flow and clarity of the paper will be supported by providing certain graph-theoretic terms and notations. We follow two primary references \cite{Chartrand,Harary}. Let $G$ be a simple, finite, and undirected, with vertex set $V(G)$ and edge set $E(G)$.
In this paper the vertex set of a graph is labeled as $\{v_{1},...v_{n}\}$
with n vertices. The open neighborhood of a vertex $v_{i}$, denoted by $%
N(v_{i})$, is defined as $N(v_{i})=\{v_{j}\in V(G):(v_{i}v_{j})\in E(G)\},$ $%
i,j\in \{1,...,n\}$. The degree of a vertex $v_{i}$, written as $\deg
(v_{i}) $, is the number of vertices adjacent to $v_{i}$. The distance
between two vertices $v_{i}$ and $v_{j}$, denoted by $d(v_{i},v_{j})$, is
the length of the shortest path connecting them, which is also called the $%
v_{i}-v_{j}$ geodesic. Additionally, three key graph parameters related to
distance are eccentricity, diameter, and radius. In a connected graph, the
eccentricity of a vertex $v_{i}$, denoted by $ec(v_{i})$, is the
largest distance from $v_{i}$ to any other vertex. The diameter of a graph $%
G,$ expressed as $diam(G)=$ $\max \{ec(v_{i}):i\in \{1,...,n\}$, is
the greatest eccentricity among all vertices. Similarly, the radius of $G$,
denoted by $rad(G)$, is the smallest eccentricity among the vertices of the
graph.

In this paper, we are intersted in the useful closeness formula for vertex $%
v_{j}$ $(1\leq j\leq n)$ and invariant of the parameter. We begin with the parameter closeness, denoted as $C(v_{j})$, formulated as follows: 
\[
C(v_{j})=\sum\limits_{v_{i}\neq v_{j}}\frac{1}{2^{d(v_{i},v_{j})}} 
\]%
where the notation $d(v_{i},v_{j})$ is the shortest distance between
vertices $v_{i}$ and $v_{j}$. Additionally, based on the closeness
parameter, Dangalchev introduced more refined parameter that is
\textit{vertex residual closeness} \cite{Dangalchev}. The main aim of vertex residual closeness is to evaluate how removing a vertex affects the network's vulnerability. The closeness value after removing vertex $v_{k}$, represented as $C_{k}$, is computed using the formula 
\[
C_{k}=\sum\limits_{v_{j}}\sum\limits_{v_{i}\neq v_{j}}\frac{1}{%
	2^{d_{v_{k}}(v_{i},v_{j})}}, 
\]%
where $d_{k}(v_{i},v_{j})$ is the distance between vertices $v_{i}$ and $%
v_{j}$ after vertex $v_{k}$ has been removed from the graph. Furthermore,
the vertex residual closeness, denoted by $R$, is given as 
$R=\min_{k}\{C_{k}\}.$ For more detailed information, the reader is referred to the following references \cite{AytacOdabas1,AytacOdabas2,AytacOdabas3,AytacOdabas4,AytacTuraci1,DangGeneralized, DangSplitting, DangThorn, Dangalchev, Dangline, OdabasAytac,TuraciOkten, TuraciVAytac}.

In this work, we will concentrate on a significant graph structure known as
the middle graph. The middle graph $M(G)$ of a graph $G$ is described as a
graph where the vertex set includes both the vertices and edges of $G$. In
this graph, two vertices are connected if one represents an edge from $G$
and the other either represents an adjacent edge or a vertex that is
incident to that edge. This structure is introduced in \cite{HamadaYoshimura} as a type of intersection graph. In addition the line graph concept is used in the paper as a tool. Line graph is obtained as adding a vertex in line graph, $L(G),$ for each edge in $G$ and this new vertices in $L(G)$ are adjacent if their orginated edges  have common vertices in $G.$

By analyzing residual closeness in middle graph structures derived from various base graphs, this paper offers new characterizations of network vulnerability and robustness. These findings contribute not only to the theoretical understanding of residual closeness, but also to its practical applications in network analysis, graph-based optimization, and the development of fault-tolerant systems in diverse domains.

In the first section, the paper will present known results and information derived from the literature used in the study. The second section will cover the computation of closeness values for the considered structures. In the third section, we will provide the results for the vertex residual closeness values  In the last section, we will provide some general results about parameters that is considered. Moreover, we develop a computational procedure for determining the closeness and residual closeness values of middle graphs and examine its computational complexity. This algorithmic perspective complements our theoretical results and facilitates the practical evaluation of closeness-based measures. Overall, our results enhance the theoretical understanding of closeness and residual closeness under vertex failures and provide a framework for analyzing the robustness of complex graph structures.
\section{Known Results}
In this section, to enhance the clarity and comprehensibility of the paper,
we will provide some essential information and certain results available in
the literature that are necessary for our calculations. First, the formula
for the geometric series and its derivative, which are frequently used in
our calculations, will be provided. Afterward, we will recall some
fundamental results related to closeness, link residual closeness, and
vertex residual closeness that we will use in our paper.

The geometric sum is:

\begin{equation}
	\sum\limits_{i=0}^{n-1}ar^{i}=\frac{a(1-r^{n})}{(1-r)}
\end{equation}

and its derivative is:

\[
\sum\limits_{i=1}^{n-1}air^{i}=a[\frac{nr^{n-1}}{(1-r)}-\frac{r^{n}-1}{%
	(1-r)^{2}}]. 
\]

Now, we can present some fundamental and useful results related to our paper:

\begin{theorem}\label{zeynep}
	\cite{Dangalchev,AytacOdabas1,OdabasAytac}The closeness of 
	\begin{itemize}	
	\item[i)] the complete graph with $n$ vertices is $C(K_{n})=\frac{n(n-1)}{2},$
	
	\item[ii)] the star graph with $n+1$ vertices is $C(S_{1,n})=\frac{n(n+3)}{4}$,
	
	\item[iii)] the path graph with $n$ vertices is        
    $C(P_{n})=2n-4+2^{2-n}$,
	
	\item[vi)]the cycle graph with $n$ vertices is\\           $C(C_{n})=\left\{ 
	\begin{array}{cc}
		2n(1-1/2^{(n-1)/2}) & ,\text{if }n\text{ is odd} \\ 
		n(2-3/2^{n/2}) & ,\text{if }n\text{ is even}%
	\end{array}%
	\right. .$
    \end{itemize}
	
\end{theorem}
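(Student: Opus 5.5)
The plan is to interpret $C(G)$ as the total closeness $\sum_{v_j}C(v_j)=\sum_{v_j}\sum_{v_i\neq v_j}2^{-d(v_i,v_j)}$, a sum over ordered pairs. This is the normalization under which the stated values hold; for instance, each vertex of $K_n$ contributes $(n-1)/2$. Each item then reduces to counting, for every $k$, the ordered pairs at distance $k$, and summing the weights $2^{-k}$ with the geometric series formula and its derivative recalled above.

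\emph{Complete graph and star.} In $K_n$ every vertex has $n-1$ vertices at distance $1$, so $C(v)=(n-1)/2$. Summing over the $n$ vertices gives $n(n-1)/2$. In $S_{1,n}$ the center sees $n$ vertices at distance $1$, giving $n/2$. Each leaf sees the center at distance $1$ and the other $n-1$ leaves at distance $2$, giving $\frac12+\frac{n-1}{4}$. The total is $\frac n2+n\bigl(\frac12+\frac{n-1}{4}\bigr)=\frac{n(n+3)}{4}$.

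\emph{Path.} In $P_n$ there are exactly $n-k$ unordered pairs at distance $k$ for $1\le k\le n-1$. Hence
\[
C(P_n)=2\sum_{k=1}^{n-1}(n-k)2^{-k}=2\Bigl[n\sum_{k=1}^{n-1}2^{-k}-\sum_{k=1}^{n-1}k2^{-k}\Bigr].
\]
The first sum equals $1-2^{1-n}$ by the geometric formula. The second is obtained from the derivative formula with $r=1/2$:
\[
\sum_{k=1}^{n-1}k2^{-k}=2-(n+1)2^{1-n}.
\]
Substituting gives $2\bigl[n-n2^{1-n}-2+(n+1)2^{1-n}\bigr]=2n-4+2^{2-n}$. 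The only delicate point in the whole proof is getting this derivative sum right, including its index range; I would check it at $n=2$ and $n=3$.

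\emph{Cycle.} By vertex-transitivity, $C(C_n)=n\,C(v)$ for any fixed vertex $v$, so I split by parity. If $n=2m+1$, each distance $1,\dots,m$ is attained by exactly two vertices. Then $C(v)=2\sum_{k=1}^m2^{-k}=2(1-2^{-m})$, which gives $2n(1-1/2^{(n-1)/2})$. If $n=2m$, each distance $1,\dots,m-1$ is attained twice and distance $m$ once, by the antipodal vertex. Then $C(v)=2(1-2^{-(m-1)})+2^{-m}=2-3\cdot2^{-m}$, which gives $n(2-3/2^{n/2})$.
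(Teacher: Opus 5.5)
Your proof is correct. The paper gives no proof of this theorem and simply cites it from the literature, and your direct count of ordered pairs at each distance (with the parity split for the cycle) is the standard computation behind those values. One caution: derive $\sum_{k=1}^{n-1}k2^{-k}=2-(n+1)2^{1-n}$ yourself by differentiating the geometric sum and multiplying by $r$, rather than substituting into the derivative formula displayed in the paper's Known Results section. As printed, that formula has a sign error and is missing the factor $r$ (it gives $5$ instead of $\frac12$ at $n=2$); your stated value is the right one.
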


\begin{theorem}
	\cite{Dangalchev} The vertex residual closeness of the complete graph $K_{n}$ is \\$%
	R(K_{n})=((n-1)(n-2))/2.$
\end{theorem}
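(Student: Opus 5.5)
The statement asks for the vertex residual closeness of $K_n$, which is a minimum over the closeness values $C_k$ obtained after deleting each vertex $v_k$. The plan has three steps: use symmetry to reduce to a single deletion, identify the graph that remains, and apply Theorem~\ref{zeynep}(i).

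\textbf{Symmetry.} The automorphism group of $K_n$ acts transitively on the vertices. Deleting any vertex $v_k$ therefore gives a graph isomorphic to the one obtained by deleting $v_1$. Since closeness is invariant under isomorphism, all values $C_k$ coincide, and so $R(K_{n})=C_1$. No genuine minimization is needed.

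\textbf{Identifying $K_n-v_k$.} Removing $v_k$ from $K_n$ leaves the induced subgraph on the other $n-1$ vertices. Any two of these vertices are still adjacent, so this subgraph is exactly $K_{n-1}$. Its distances are unchanged: $d_{v_k}(v_i,v_j)=1$ for all distinct remaining $v_i,v_j$. In particular, removing a vertex creates no disconnected pairs.

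\textbf{Computing the value.} Each ordered pair of distinct remaining vertices contributes $1/2$, and there are $(n-1)(n-2)$ such ordered pairs. Hence
\[
C_k=(n-1)(n-2)\cdot\tfrac12=\frac{(n-1)(n-2)}{2}.
\]
Equivalently, substitute $n-1$ for $n$ in Theorem~\ref{zeynep}(i). Taking the minimum over $k$ gives $R(K_n)=\frac{(n-1)(n-2)}{2}$.

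The only step needing care is the normalization convention. Theorem~\ref{zeynep}(i) gives $C(K_n)=n(n-1)/2$, which is consistent with summing $1/2$ over ordered pairs, matching the double sum in the definition of $C_k$. I would check that the same convention is used for $C_k$, so that the answer is not off by a factor of $2$. Beyond that there is no real obstacle; the small cases $n\le 2$ are trivial and agree with the formula.
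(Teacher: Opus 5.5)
Your proof is correct and is the standard argument. Every vertex deletion leaves a copy of $K_{n-1}$ in which all distances are still $1$, so each $C_k$ equals $C(K_{n-1})=(n-1)(n-2)/2$ by Theorem~\ref{zeynep}(i); your check that $C_k$ and $C(K_n)$ both sum over ordered pairs is the right consistency check. The paper itself gives no proof of this statement and simply quotes it from \cite{Dangalchev}, so there is no route in the paper that differs from yours.
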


\begin{theorem}\label{line}
	\cite{Dangline} The closeness of some special line graphs are
	
	(i) $C(L(C_{n}))=C(C_{n}),$
	
	(ii) $C(L(P_{n}))=C(P_{n-1})=2n-6+2^{3-n},$
	
	(iii)) $C(L(S_{1,n}))=C(K_{n})=\frac{n(n-1)}{2},$
	
	(iv) $C(L(K_{n}))=\frac{n(n^{3}+2n^{2}-13n+10)}{16}.$
\end{theorem}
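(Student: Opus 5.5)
The plan is to identify each line graph with a familiar graph, or to compute its distance distribution directly, and then invoke Theorem \ref{zeynep}. Throughout, $C(G)$ denotes the graph closeness $\sum_{v_j}C(v_j)$, i.e.\ the sum over ordered pairs of distinct vertices. This is consistent with Theorem \ref{zeynep}(i): for $K_n$ each vertex contributes $(n-1)/2$, giving $n(n-1)/2$.

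For (i)--(iii) I would only establish isomorphisms. In $C_n$ the edges $e_i=v_iv_{i+1}$ (indices mod $n$) satisfy: $e_i$ meets exactly $e_{i-1}$ and $e_{i+1}$. Hence $L(C_n)\cong C_n$, and (i) follows immediately.

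For $P_n$ the $n-1$ edges form a chain in which consecutive edges share a vertex. Hence $L(P_n)\cong P_{n-1}$, and Theorem \ref{zeynep}(iii) gives
\[
C(P_{n-1})=2(n-1)-4+2^{2-(n-1)}=2n-6+2^{3-n}.
\]

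For $S_{1,n}$ all $n$ edges share the center, so $L(S_{1,n})\cong K_n$, and Theorem \ref{zeynep}(i) gives $n(n-1)/2$.

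For (iv) I would count distances in $L(K_n)$ directly. Its vertices are the $N=n(n-1)/2$ two-element subsets of $\{1,\dots,n\}$, with two subsets adjacent exactly when they share one element. Fix the vertex $\{a,b\}$.
\begin{itemize}
\item Its neighbours are the sets $\{a,x\}$ and $\{b,x\}$ with $x\notin\{a,b\}$, so it has $2(n-2)$ neighbours.
\item Every remaining subset $\{c,d\}$ is disjoint from $\{a,b\}$ and lies at distance exactly $2$, via the path through $\{a,c\}$.
\item There are $\binom{n-2}{2}=(n-2)(n-3)/2$ such subsets.
\end{itemize}
Each vertex therefore has closeness
\[
\frac{2(n-2)}{2}+\frac{(n-2)(n-3)}{8}=\frac{(n-2)(n+5)}{8}.
\]
Multiplying by $N$ gives
\[
\frac{n(n-1)(n-2)(n+5)}{16},
\]
and expanding $(n-1)(n-2)(n+5)=n^3+2n^2-13n+10$ gives the stated formula.

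The only step needing care is (iv). There one must check that $\mathrm{diam}(L(K_n))\le 2$, so that no pair lies at distance $3$ or more. The explicit path $\{a,b\},\{a,c\},\{c,d\}$ settles this for all $n\ge 4$. The small cases $n\le 3$ reduce to a single vertex or to $K_3$, which can be checked against the formula by hand.
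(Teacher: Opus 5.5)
Your proof is correct. The paper gives no proof of Theorem \ref{line} to compare against: it quotes the result from Dangalchev's work on line graphs. Your argument therefore serves as a self-contained verification rather than an alternative route.

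For parts (i)--(iii), the standard isomorphisms $L(C_n)\cong C_n$ (for $n\ge 3$), $L(P_n)\cong P_{n-1}$ and $L(S_{1,n})\cong K_n$, combined with Theorem \ref{zeynep}, are all that is needed. Your simplification $2(n-1)-4+2^{2-(n-1)}=2n-6+2^{3-n}$ is right. For part (iv), the diameter-two count is the right tool. The per-vertex value $(n-2)(n+5)/8$ and the expansion $(n-1)(n-2)(n+5)=n^3+2n^2-13n+10$ both check out.

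The paper in fact carries out your count for (iv) without saying so, in its proof of the formula for $C(M(K_n))$. There the term $C(v\sim v)=\frac{n(n-1)(4(n-1)+n(n-1)-6)}{16}$ equals $C(L(K_n))$, for two reasons:
\begin{itemize}
\item Distances between edge-vertices in a middle graph coincide with those in the line graph. A detour $e,u,f$ through an original vertex $u$ forces $e$ and $f$ to share $u$, so they are already adjacent.
\item $4(n-1)+n(n-1)-6=(n-2)(n+5)$.
\end{itemize}
So your computation also shows that the quoted formula is consistent with the paper's own calculation.

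One minor remark: the separate hand check for $n\le 3$ is unnecessary. For $n\le 3$ there are no disjoint pairs, so the count $(n-2)(n-3)/2$ of distance-two vertices is $0$. Your general formula already returns $0$ for $n=2$ and $3$ for $n=3$.
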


\section{Closeness of Some Middle Graphs}
In this section, we investigate some results about the closeness of some
special types of middle graphs.

\begin{theorem}\label{C1}
	Let $G$ be a path graph with n vertices, $P_{n}$. The closeness value of the	middle path graph is
\[
C(M(P_{n}))=7n-16+\frac{18}{2^{n}}. 
\]
\end{theorem}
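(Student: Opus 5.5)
We will read $C(G)$ as the graph closeness $\sum_{v_j}C(v_j)$, i.e.\ the sum of $2^{-d(u,w)}$ over all ordered pairs of distinct vertices. This is consistent with Theorem~\ref{zeynep}, since $C(K_n)=n(n-1)\cdot\frac12$. The plan is a direct distance computation in $M(P_n)$, split by the type of vertex pair, followed by summation with the geometric-series identities recalled in the first section.

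First we fix the structure. Let $P_n=v_1v_2\cdots v_n$ and write $e_i=v_iv_{i+1}$ for $1\le i\le n-1$. In $M(P_n)$ the vertices $e_1,\dots,e_{n-1}$ induce a path $P_{n-1}$. Each inner vertex $v_i$ ($2\le i\le n-1$) is adjacent exactly to $e_{i-1}$ and $e_i$, while $v_1$ and $v_n$ are pendant at $e_1$ and $e_{n-1}$. From this we read off the distances:
\begin{itemize}
\item $d(e_i,e_j)=|i-j|$;
\item for $i<j$, $d(v_i,v_j)=j-i+1$, via $v_i e_i e_{i+1}\cdots e_{j-1} v_j$; no shorter route exists, since every path must pass through the edge-vertices;
\item $d(v_i,e_j)=1+\min\{|j-i|,\,|j-(i-1)|\}$, so $d(v_i,e_j)=1+(j-i)$ for $j\ge i$ and $d(v_i,e_j)=1+(i-1-j)$ for $j\le i-1$.
\end{itemize}

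Next we split $C(M(P_n))$ into three blocks.
\begin{itemize}
\item \emph{Edge--edge pairs.} These contribute exactly $C(P_{n-1})=2n-6+2^{3-n}$, by Theorem~\ref{zeynep}(iii) (equivalently Theorem~\ref{line}(ii)).
\item \emph{Vertex--vertex pairs.} These contribute $2\sum_{k=1}^{n-1}(n-k)2^{-(k+1)}=\sum_{k=1}^{n-1}(n-k)2^{-k}$, since there are $n-k$ unordered pairs with $j-i=k$.
\item \emph{Vertex--edge pairs.} Each unordered pair is counted twice. For fixed $v_i$, the edges $e_i,\dots,e_{n-1}$ lie at distances $1,2,\dots,n-i$, and the edges $e_{i-1},\dots,e_1$ lie at distances $1,\dots,i-1$. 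Hence $v_i$ contributes $(1-2^{-(n-i)})+(1-2^{-(i-1)})$. Summing over $i$ gives
\[
2\bigl[\,2n-2\textstyle\sum_{m=0}^{n-1}2^{-m}\bigr]=4n-8+2^{3-n}.
\]
\end{itemize}

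The only mildly delicate step is the vertex--vertex sum $\sum_{k=1}^{n-1}(n-k)2^{-k}$. We split it as $n\sum 2^{-k}-\sum k2^{-k}$ and evaluate the second piece with the derivative form of the geometric series. This gives $n(1-2^{1-n})-\bigl(2-(n+1)2^{1-n}\bigr)=n-2+2^{1-n}$.

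Finally we add the three blocks:
\[
(2n-6+2^{3-n})+(n-2+2^{1-n})+(4n-8+2^{3-n})=7n-16+(16+2)2^{-n},
\]
which is the claimed value $7n-16+18/2^{n}$. As a sanity check, for $n=2$ the graph $M(P_2)=P_3$ has closeness $2.5$, which agrees with the formula.

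The main obstacle is bookkeeping rather than ideas. We must make sure the endpoints $v_1,v_n$ (which have only one neighbouring edge-vertex) are handled correctly by the formula $d(v_i,e_j)$, and that the ordered-pair convention is applied uniformly, so that no block is halved or doubled by mistake.
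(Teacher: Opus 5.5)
Your proof is correct and follows the same route as the paper: the same three-block split into vertex--vertex pairs, edge--edge pairs (identified with $C(P_{n-1})$) and vertex--edge pairs, with ordered pairs counted throughout. The only cosmetic difference is in the vertex--edge block, where you sum per original vertex as two geometric tails, while the paper counts the $2(n-k)$ vertex--edge pairs at each distance $k$ and merges everything into $5\sum_{i=1}^{n-1}(n-i)2^{-i}+C(P_{n-1})$; both evaluations give $4n-8+2^{3-n}$ for that block.
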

\begin{proof}
	Let's label the set of vertices of the middle path graph as $%
	\{u_{1},u_{2},\ldots ,u_{n},\\v_{1}, v_{2},\ldots ,v_{n-1}\}$. Here, $u_{i}$,
	for $1\leq i\leq n$, represents the vertices of the $P_{n}$ graph, and $%
	v_{j} $, for $1\leq j\leq n-1$, are vertices added later according to the
	definition of the middle graph. To obtain the closeness value, we can
	examine the relationships among the vertices under three different cases:
	
	\begin{enumerate}
		\item between the vertices $u_{i}$ for $1\leq i\leq n$,
		
		\item between the vertices $v_{j}$ for $1\leq j\leq n-1$, and also
		
		\item the relationship between the vertices $u_{i}$ and $v_{j}$, for $1\leq
		i\leq n,~1\leq j\leq n-1$.
	\end{enumerate}
	
	Using splitting the graph as in \cite{DangSplitting},
	the closeness value can be calculated more easily. Therefore, for the first case, the total closeness value is obtained as 
	\[
	2(\frac{(n-1)}{2^{2}}+\frac{(n-2)}{2^{3}}+\cdots +\frac{1}{2^{n}}%
	)=2\sum\limits_{i=1}^{n-1}\frac{(n-i)}{2^{i+1}}. 
	\]
	
	For the second case, the total closeness value is the closeness value of the 
	$P_{n-1}$ graph, which was calculated in \cite{Dangalchev}
	as%
	\[
	C(P_{n-1})=2(n-1)-4+\frac{1}{2^{(n-3)}}. 
	\]%
	The third case involves calculating the closeness value related to the
	relationship between the vertices of the path graph and the vertices added
	later to create the middle graph. Let us denote this total value as $C(u%
	\sim v)$. In this case, considering the distances between vertices, $%
	C(u\sim v)$ can be expressed as:
\begin{align*}
C(u\sim v)&=2(\frac{(n-1)}{2^{1}}+\frac{(n-2)}{2^{2}}+\cdots +\frac{1}{%
	2^{n-1}})\\
    &=2\sum\limits_{i=1}^{n-1}\frac{(n-i)}{2^{i}}. 
\end{align*}

Hence, the total closeness value for the middle path graph is:
\begin{eqnarray*}
	C(M(P_{n})) &=&2\sum\limits_{i=1}^{n-1}\frac{(n-i)}{2^{i+1}}+C(P_{n-1})+2C(u%
	\sim v) \\
	&=&5\sum\limits_{i=1}^{n-1}\frac{(n-i)}{2^{i}}+2(n-1)-4+\frac{1}{2^{n-3}}.
\end{eqnarray*}

By making use of the geometric sum formula in equation (1) and its
derivative formula in equation (2), we can calculate the total closeness
value as follows:

\begin{eqnarray*}
	C(M(P_{n})) &=&5(n-2)+\frac{5}{2^{(n-1)}}+2(n-1)-4+\frac{1}{2^{(n-3)}} \\
	&=&7n-16+\frac{18}{2^{n}}.
\end{eqnarray*}
\end{proof}
\begin{theorem}\label{M(Cn)}
	Let G be a cycle graph with $n$ vertices, $C_{n}$. The closeness value of
	the middle cycle graph is
    {\small
\[
C(M(C_{n}))= 
\begin{cases}
	\frac{3}{2}C(C_{n})+4n(1-(\frac{1}{2})^{\frac{n}{2}}), & \text{if } n\text{
		is even } \\ 
	\frac{3}{2}C(C_{n})+2n(2(1-(\frac{1}{2})^{\frac{n-1}{2}})+(\frac{1}{2})^{%
		\frac{n+1}{2}}), & \text{if }n\text{ is odd}%
\end{cases}%
\]%
}
where $C(C_{n})=\left\{ 
\begin{array}{cc}
	2n(1-1/2^{(n-1)/2}) & ,\text{if }n\text{ is odd} \\ 
	n(2-3/2^{n/2}) & ,\text{if }n\text{ is even}%
\end{array}%
\right. .$
\end{theorem}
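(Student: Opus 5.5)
The plan is to follow the same splitting strategy as in the proof of Theorem \ref{C1}. We label $V(M(C_n))=\{u_1,\ldots,u_n,v_1,\ldots,v_n\}$, where $u_i$ are the vertices of $C_n$ and $v_i$ is the vertex for the edge $u_iu_{i+1}$ (indices mod $n$). Thus $v_i$ is adjacent to $u_i$, $u_{i+1}$, $v_{i-1}$ and $v_{i+1}$. Since $C(M(C_n))$ sums $2^{-d}$ over ordered pairs, we split it into three contributions: pairs $v\sim v$, pairs $u\sim u$, and mixed pairs $u\sim v$ counted in both orders. This ordered-pair convention is the one implicit in the formulas of Theorem \ref{zeynep}.

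\textbf{Step 1 (the $v$-vertices).} The $v_j$ induce $L(C_n)\cong C_n$. Any path leaving this cycle through a $u$-vertex only returns to neighbours already adjacent on the cycle, so it is never shorter. Hence $d(v_i,v_j)=d_{C_n}(i,j)$, and by Theorem \ref{line}(i) this part contributes exactly $C(C_n)$.

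\textbf{Step 2 (the $u$-vertices).} We show that for $i\neq j$ with cycle distance $k$, $d(u_i,u_j)=k+1$. For the upper bound, take the route $u_i, v_i, \ldots, v_{j-1}, u_j$: the segment from $v_i$ to $v_{j-1}$ has $k-1$ edges, giving $k+1$ in total. For the lower bound, the $u$'s are pairwise nonadjacent, so any path must enter the $v$-cycle at one of the two edge-vertices at $u_i$ and leave at one at $u_j$; the closest such pair is at cycle distance $k-1$. Hence every term of $C(C_n)$ is halved and this part contributes $\frac12 C(C_n)$. Together with Step 1 this gives the $\frac32 C(C_n)$ term.

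\textbf{Step 3 (mixed pairs).} This is the only genuinely new computation and the step where the parity split arises. Fix $u_i$. Its neighbours among the $v$'s are exactly $v_{i-1}$ and $v_i$, which are adjacent on the $v$-cycle. So $d(u_i,v_j)=1+\min\{d_{C_n}(v_j,v_{i-1}),d_{C_n}(v_j,v_i)\}$, and the $v$'s are arranged symmetrically in levels on both sides of this adjacent pair.
\begin{itemize}
\item If $n$ is even, each distance $k=1,\ldots,n/2$ occurs exactly twice. This gives $2\sum_{k=1}^{n/2}2^{-k}=2(1-2^{-n/2})$.
\item If $n$ is odd, each distance $k=1,\ldots,(n-1)/2$ occurs twice and one antipodal vertex lies at distance $(n+1)/2$. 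This gives $2(1-2^{-(n-1)/2})+2^{-(n+1)/2}$.
\end{itemize}
These sums follow from the geometric sum formula (1). Multiplying by the $n$ choices of $u_i$ and by $2$ for the two orders yields $4n(1-(\frac12)^{n/2})$ in the even case and $2n\bigl(2(1-(\frac12)^{(n-1)/2})+(\frac12)^{(n+1)/2}\bigr)$ in the odd case. Adding these to $\frac32 C(C_n)$ gives the statement.

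The main care point is justifying the distance formulas in Steps 2 and 3, in particular that no shortcut through $u$-vertices exists. The short entry/exit argument above suffices because the $u$'s are pairwise nonadjacent and each has degree $2$ in $M(C_n)$.
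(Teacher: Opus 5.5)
Your proposal is correct and follows essentially the same route as the paper: the same three-way split into $u\sim u$, $v\sim v$ and mixed pairs, with $\frac32 C(C_n)$ coming from the first two and the parity-dependent level count (each distance occurring twice, plus one antipodal vertex when $n$ is odd) giving the mixed term, doubled for the two orders. Your explicit justification that there are no shortcuts through $u$-vertices and that $d(u_i,u_j)=d_{C_n}(i,j)+1$ supplies what the paper only asserts as following from ``the structure of the middle cycle graph''.
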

\begin{proof}
	Let's label the set of vertices of the middle cycle graph as $\{u_{1},u_{2},\dots,u_{n},\\v_{1}, v_{2},\dots ,v_{n}\}$. Here, $u_{i}$, for 
	$1\leq i\leq n$, represents the vertices of the $C_{n}$ graph, and $v_{j}$,
	for $1\leq j\leq n$, are vertices added later according to the definition of
	the middle graph. To obtain the closeness value, we can relate the vertices
	of the graph as follows:
	
	\begin{enumerate}
		\item The relationship among vertices $u_{i}$, $1\leq i\leq n$, total
		closeness value represented as $C(u\sim u)$.
		
		\item The relationship among vertices $v_{j}$, $1\leq j\leq n$, total
		closeness value represented as $C(v\sim v)$.
		
		\item The relationship between vertices $u_{i}$ and $v_{j}$ and total
		closeness value represented as $C(u\sim v)$.
	\end{enumerate}
	
	Thus, we can split the graph and calculate the closeness value. Due to the
	structure of the middle cycle graph, for the first and second cases, the
	total closeness value is obtained as $3/2$ of the closeness value of a cycle
	graph with $n$ vertices, denoted as $C(C_{n})$. The closeness value of the
	cycle graph is $C(C_{n})$ is known from Theorem \ref{zeynep}.
	
	The third case involves calculating the closeness value related to the
	relationship between the vertices of the cycle graph and the vertices added
	later to create the middle graph. Let us denote this value as $C(u\sim v)$. Here, the distance between vertices $u$ and $v$ will vary depending on
	the diameter of the cycle graph. Therefore, considering the distances
	between vertices and depending on whether $n$ is even or odd, we will
	examine two cases:
	
	Case1: If $n$ is odd;%
	\begin{eqnarray*}
		C(u\sim v) &=&n(\frac{2}{2^{1}}+\frac{2}{2^{2}}+\cdots +\frac{2}{%
			2^{(n-1)/2}}+(\frac{1}{2})^{(n+1)/2}) \\
		&=&n(2\sum\limits_{i=1}^{\frac{n-1}{2}}\frac{1}{2^{i}}+(\frac{1}{2}%
		)^{(n+1)/2}).
	\end{eqnarray*}%
	In this case, the value of $C(u\sim v)$ can be calculated using the
	geometric series formula in equation (1):%
	\[
	2C(u\sim v)=2n(2(1-(\frac{1}{2})^{(n-1)/2})+(\frac{1}{2})^{(n+1)/2}). 
	\]%
	
	Case2: If $n$ is even;%
	\[
	C(u\sim v)=n(\frac{2}{2^{1}}+\frac{2}{2^{2}}+\cdots +\frac{2}{2^{(%
			\frac{n}{2})}})=n(2\sum\limits_{i=1}^{\frac{n}{2}}\frac{1}{2^{i}}). 
	\]%
	In this case, the value of $C(u\sim v)$ also can be calculated using
	the geometric series formula in equation (1):%
	\[
	C(u\sim v)=2n(1-(\frac{1}{2})^{(n/2)}) 
	\]%
	Due to Case 1 and Case 2, the total closeness value for the middle cycle
	graph, denoted as $C(M(C_{n})$, is calculated using the following formula:%
	\[
	C(M(C_{n}))=C(u\sim u)+C(v\sim v)+2C(u\sim v). 
	\]%
	Therefore, we get the desired result
	where $C(C_{n})$ is provided in Theorem \ref{zeynep}.
\end{proof}

\begin{theorem}\label{M(S)}
	Let $G$ be a star graph with $n+1$ vertices, $S_{1,n}$ where $n\geq2$. The closeness value
	of the middle star graph is
\[
C(M(S_{1,n}))=\frac{(9n^{2}+11n)}{8}. 
\]
\end{theorem}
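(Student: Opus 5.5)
The plan is to split the vertex set of $M(S_{1,n})$ into classes, exactly as in the proofs of Theorems \ref{C1} and \ref{M(Cn)}, and add up the closeness contributions class by class. Write $c$ for the center of $S_{1,n}$, $u_{1},\dots ,u_{n}$ for its leaves, and $v_{1},\dots ,v_{n}$ for the new vertices, where $v_{i}$ corresponds to the edge $cu_{i}$. Since all edges of a star share the center, the definition of the middle graph gives the following adjacencies:
\begin{itemize}
\item the $v_{j}$ induce a complete graph $K_{n}$;
\item $c$ is adjacent to every $v_{j}$ and to nothing else;
\item $u_{i}$ is adjacent only to $v_{i}$.
\end{itemize}
Here $C(G)$ is the sum of $C(w)$ over all vertices $w$, so every unordered pair at distance $d$ contributes $2\cdot 2^{-d}$.

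Next I would read off all distances from these adjacencies.
\begin{itemize}
\item $v$--$v$ pairs are at distance $1$. This part is just $C(K_{n})=\frac{n(n-1)}{2}$ by Theorem \ref{zeynep}.
\item $c$--$v_{j}$ pairs are at distance $1$, contributing $2n\cdot\frac12=n$.
\item $c$--$u_{i}$ pairs are at distance $2$ (through $v_{i}$), contributing $2n\cdot\frac14=\frac n2$.
\item $u_{i}$--$v_{i}$ pairs are at distance $1$, contributing $n$.
\item $u_{i}$--$v_{j}$ pairs with $j\neq i$ are at distance $2$ (through $v_{i}$), contributing $2n(n-1)\cdot\frac14=\frac{n(n-1)}{2}$.
\item $u_{i}$--$u_{j}$ pairs with $i\neq j$ are at distance $3$ (via $v_{i}v_{j}$), contributing $n(n-1)\cdot\frac18$.
\end{itemize}

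Summing gives
\[
\frac{n(n-1)}{2}+\frac{n(n-1)}{2}+\frac{n(n-1)}{8}+n+\frac n2+n=\frac{9n(n-1)}{8}+\frac{5n}{2}=\frac{9n^{2}+11n}{8},
\]
which is the claimed value.

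The only step needing care is justifying that these distances are exact rather than just upper bounds. For each pair I would exhibit a path of the stated length and note why no shorter path exists:
\begin{itemize}
\item $u_{i}$ and $u_{j}$ are nonadjacent and have no common neighbour, since $N(u_{i})=\{v_{i}\}$; so their distance is at least $3$, and $u_{i}v_{i}v_{j}u_{j}$ achieves it.
\item $c$ is not adjacent to $u_{i}$, so $d(c,u_{i})\geq 2$.
\end{itemize}
The hypothesis $n\geq 2$ is used only so that pairs with $i\neq j$ exist. The formula still evaluates correctly when those terms vanish.
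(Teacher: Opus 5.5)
Your proposal is correct and follows essentially the same route as the paper's proof, which splits the sum into $C(u\sim u)=\frac{n}{2}+\frac{n(n-1)}{8}$, $C(v\sim v)=C(K_n)$ and $2C(u\sim v)=2n+\frac{n(n-1)}{2}$; these are exactly your six contributions regrouped, with the center counted among the $u$'s. Your explicit check that the distances are exact, and not merely upper bounds, is a small addition that the paper leaves implicit.
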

\begin{proof}
	Let us label the set of vertices of the middle star graph as $%
	\{u_{1},u_{2},\ldots ,u_{n+1},\\v_{1},v_{2},\ldots ,v_{n}\}$. Here, $u_{i}$,
	for $1\leq i\leq n+1$, represents the vertices of the $S_{1,n}$ graph, with $%
	u_{1}$ being the central vertex. $v_{j}$, for $1\leq j\leq n$, are vertices
	added later according to the definition of the middle graph. To obtain the
	closeness value, we can relate the vertices of the graph as follows:
	
	\begin{itemize}
		\item The relationship among vertices $u_{i}$, $1\leq i\leq n+1$, total
		closeness value represented as $C(u\sim u)$.
		
		\item The relationship among vertices $v_{j}$, $1\leq j\leq n$, total
		closeness value represented as $C(v\sim v)$.
		
		\item The relationship between vertices $u_{i}$ and $v_{j}$ and total
		closeness value represented as $C(u\sim v)$.
	\end{itemize}
	
	We can split the graph and easily calculate the closeness value. In this	case, for the first setting, the total closeness value is obtained as $%
	\frac{n}{2}+\frac{n(n-1)}{2^{3}}$ and, for the second setting, the total
	closeness value is the closeness value of the $K_{n}$ graph, which is
	calculated as $C(K_{n})=\frac{n(n-1)}{2}$ in \cite{Dangalchev} and, for the third setting, which involves calculating the closeness value related to the relationship between the vertices of the star graph and the vertices added later to create the middle graph, we can express it as $C(u\sim v)=n+\frac{n(n-1)}{2^{2}}$, considering the
	distances between the vertices. Thus, the total closeness value for the
	middle star graph denoted as $C(M(S_{1,n}))$, is:%
	\begin{eqnarray*}
		C(M(S_{1,n})) &=&C(u_{i})+C(v_{j})+2C(u\sim v) \\
		&=&\frac{(9n^{2}+11n)}{8}.
	\end{eqnarray*}
\end{proof}

\begin{theorem}
	Let $G$ be a complete graph with $n$ vertices, $K_{n}, n\geq3$. The closeness value
	of the complete graph is
\[
C(M(K_{n}))=\frac{n(n-1)(n^{2}+7n+2)}{16}. 
\]
\end{theorem}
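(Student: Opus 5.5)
We split the vertex set of $M(K_{n})$ exactly as in the proofs of Theorems \ref{M(Cn)} and \ref{M(S)}. Let $u_{1},\dots ,u_{n}$ be the vertices of $K_{n}$ and let $v_{ij}$ ($1\le i<j\le n$) be the $m=\frac{n(n-1)}{2}$ vertices coming from the edges. Then
\[
C(M(K_{n}))=C(u\sim u)+C(v\sim v)+2C(u\sim v),
\]
where $C(u\sim u)$ and $C(v\sim v)$ are sums over ordered pairs, and $C(u\sim v)$ is the sum over all pairs $(u_{k},v_{ij})$, each counted once. Every distance in $M(K_{n})$ is either $1$ or $2$, so the proof reduces to determining these distances and counting pairs.

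\emph{Step 1 ($u\sim u$).} The vertices $u_{i}$ form an independent set in $M(K_{n})$, and $u_{i}v_{ij}u_{j}$ is a path. Hence $d(u_{i},u_{j})=2$, and
\[
C(u\sim u)=\frac{n(n-1)}{4}.
\]

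\emph{Step 2 ($v\sim v$).} The subgraph induced on the $v_{ij}$ is exactly $L(K_{n})$. We must check that distances are not shortened by passing through $u$-vertices. This holds because any path through some $u_{k}$ has length at least $2$, and two disjoint edges are at distance $2$ in $L(K_{n})$ already. Therefore, by Theorem \ref{line}(iv),
\[
C(v\sim v)=C(L(K_{n}))=\frac{n(n^{3}+2n^{2}-13n+10)}{16}.
\]

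\emph{Step 3 ($u\sim v$).} If $k\in\{i,j\}$, then $d(u_{k},v_{ij})=1$. Otherwise $d(u_{k},v_{ij})=2$, via the path $u_{k}v_{ki}v_{ij}$. Each $u_{k}$ is incident to $n-1$ edge-vertices and non-incident to $m-(n-1)$ of them. Thus
\[
C(u\sim v)=n\Big(\frac{n-1}{2}+\frac{m-n+1}{4}\Big).
\]

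Finally, we substitute $m=\frac{n(n-1)}{2}$, add the three contributions, and factor out $\frac{n(n-1)}{16}$ to reach $\frac{n(n-1)(n^{2}+7n+2)}{16}$. The case $n=3$ serves as a check: the contributions are $\frac32$, $3$, and $2\cdot\frac{15}{4}$, which sum to $12$, and this agrees with the stated formula.

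The main obstacle is not conceptual. It lies in verifying that the induced distances in Steps 2 and 3 really equal the distances in $M(K_{n})$, and then in the routine algebraic simplification. The hypothesis $n\ge 3$ is needed so that every non-incident pair in Step 3 actually admits the length-$2$ path $u_{k}v_{ki}v_{ij}$.
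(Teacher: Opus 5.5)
Your proof is correct and follows the paper's argument. You use the same split into $C(u\sim u)=\frac{n(n-1)}{4}$, $C(u\sim v)=\frac{n(n-1)(n+2)}{8}$ (each $u_k$ at distance $1$ from its $n-1$ incident edge-vertices and $2$ from the rest) and $C(v\sim v)$, the only cosmetic difference being that you obtain $C(v\sim v)$ by identifying the edge-vertices with $L(K_n)$ and invoking Theorem~\ref{line}(iv), whereas the paper counts the $2(n-2)$ neighbours of each edge-vertex directly; both give $\frac{n(n-1)(n^{2}+3n-10)}{16}$. One small correction to your closing remark: $n\ge 3$ is not what makes the path $u_k v_{ki} v_{ij}$ exist (it exists whenever $k\notin\{i,j\}$), and the formula in fact also holds for $n=1,2$, where there are simply no non-incident pairs.
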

\begin{proof}
	Let us label the set of vertices of the middle star graph as $%
	\{u_{1},u_{2},\ldots ,u_{n+1},\\v_{1},v_{2},\ldots ,v_{n(n-1)/2}\}$. Here, $%
	u_{i}$, for $1\leq i\leq n$, represents the vertices of the $K_{n}$ graph, $%
	v_{j}$, for $1\leq j\leq n(n-1)/2$, are vertices added later according to
	the definition of the middle graph. To obtain the closeness value, we can relate the vertices of the graph as follows:
	
	\begin{itemize}
		\item The relationship among vertices $u_{i}$, $1\leq i\leq n$, total
		closeness value represented as $C(u\sim u)$.
		
		\item The relationship among vertices $v_{j}$, $1\leq j\leq n(n-1)/2$, total
		closeness value represented as $C(v\sim v)$.
		
		\item The relationship between vertices $u_{i}$ and $v_{j}$ and total
		closeness value represented as $C(u\sim v)$.
	\end{itemize}
	
	We can split the graph and easily calculate the closeness value. In this
	case, for the first bullet, the total closeness value is obtained as $\frac{%
		n(n-1)}{2^{2}}$ and, for the second bullet, each vertex $v_{j}$ has $2(n-2)$
	neighbour in the set of $\{v_{j^{\prime }}:$ $1\leq j^{\prime }\leq
	n(n-1)/2, $ $j\neq j^{\prime }\}$ and the remaining vertices are at a
	distance of $2$. Based on this, $C(v\sim v)=\frac{%
		n(n-1)(4(n-1)+n(n-1)-6)}{2^{4}}$ ,and for the third bullet, each vertex $%
	u_{i}$ has $(n-1)$ neighbour in the set of $\{v_{j}:$ $1\leq j\leq
	n(n-1)/2\} $ and the remaining vertices are at a distance of $2$. Based on
	this, $C(u\sim v)=\frac{n(n-1)(n+2)}{2^{3}}.$. Thus, the total
	closeness value for the middle complete graph denoted as $C(M(K_{n}))$, is:%
	\begin{eqnarray*}
		C(M(K_{n})) &=&C(u\sim u)+C(v\sim v)+2C(u\sim v) \\
		&=&\frac{n(n-1)(n^{2}+7n+2)}{16}.
	\end{eqnarray*}
\end{proof}

\begin{theorem} \label{MW1n}
	Let $G$ be a wheel graph with $n+1$ vertices, $W_{1,n}$ provided that $n>4$.
	The closeness value of the middle wheel graph is%
	\[
	C(M(W_{1,n}))=2n^{2}+6n. 
	\]
\end{theorem}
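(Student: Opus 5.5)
We will compute $C(M(W_{1,n}))$ directly as the sum of the individual closeness values $C(x)=\sum_{y\neq x}2^{-d(x,y)}$ over all $3n+1$ vertices of the middle graph. The high symmetry of the wheel makes this easier than splitting the vertices into pairs of classes as in the earlier proofs. Label the hub $u_{0}$ and the rim vertices $u_{1},\ldots,u_{n}$, with indices taken mod $n$. Let $s_{i}$ denote the vertex of $M(W_{1,n})$ coming from the spoke $u_{0}u_{i}$, and $r_{i}$ the vertex coming from the rim edge $u_{i}u_{i+1}$. By rotational symmetry there are only four orbits: $\{u_{0}\}$, $\{u_{i}\}$, $\{s_{i}\}$ and $\{r_{i}\}$. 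It is therefore enough to find, for one representative of each orbit, how many vertices lie at distance $1,2,3$. As a check, each count must total $3n$.

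First we record the adjacencies from the definition of the middle graph. The $s_{i}$ form a clique $K_{n}$, since all spokes share $u_{0}$. Each $s_{i}$ is adjacent to $u_{0},u_{i},r_{i-1},r_{i}$. Each $r_{i}$ is adjacent to $u_{i},u_{i+1},s_{i},s_{i+1},r_{i-1},r_{i+1}$. We then read off the distance profiles.
\begin{itemize}
\item[(a)] The hub $u_{0}$ has the $n$ vertices $s_{i}$ at distance $1$, and all $u_{i}$ and $r_{i}$ at distance $2$. This gives $C(u_{0})=n$.
\item[(b)] The vertex $s_{i}$ has $n+3$ neighbours. The remaining $2n-3$ vertices ($u_{j}$ for $j\neq i$, and $r_{j}$ for $j\neq i-1,i$) are at distance $2$ via $s_{j}$. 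This gives $C(s_{i})=\frac{4n+3}{4}$.
\item[(c)] The rim vertex $u_{i}$ has $3$ neighbours. At distance $2$ it has $n+4$ vertices: $u_{0}$, $u_{i\pm1}$, all $s_{j}$ with $j\neq i$, and $r_{i-2},r_{i+1}$. The remaining $2n-7$ vertices are at distance $3$, reached through the route $s_{i}\,s_{j}$. This gives $C(u_{i})=\frac{4n+13}{8}$.
\item[(d)] The vertex $r_{i}$ has $6$ neighbours. At distance $2$ it has $n+3$ vertices: $u_{0}$, the other $n-2$ vertices $s_{j}$, $u_{i-1},u_{i+2}$, and $r_{i\pm2}$. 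The remaining $2n-9$ vertices are at distance $3$. This gives $C(r_{i})=\frac{4n+21}{8}$.
\end{itemize}
Summing gives $n+n\frac{4n+3}{4}+n\frac{4n+13}{8}+n\frac{4n+21}{8}=2n^{2}+6n$, which is the claimed value.

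The main obstacle is justifying the distance-$3$ claims in (c) and (d). We must show that no shortcut of length $2$ exists, for instance from $u_{i}$ to a far $r_{j}$. We must also check that the listed distance-$2$ vertices are pairwise distinct. This is exactly where the hypothesis $n>4$ enters: for $n\le 4$ the vertices $r_{i-2}$ and $r_{i+2}$ coincide or wrap around, and the counts $n-4$, $n-5$ become invalid. We handle this with a short case check. Any path of length $2$ from $u_{i}$ passes through $s_{i}$, $r_{i-1}$ or $r_{i}$, so its endpoints are confined to the neighbourhoods of these three vertices. Every other vertex can be reached in three steps along a route $u_{i}\,s_{i}\,s_{j}\,x$. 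The argument for $r_{i}$ is analogous, using its six neighbours.
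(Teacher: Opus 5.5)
Your proof is correct and is essentially the paper's argument. The paper also counts, for the hub, a rim vertex, a spoke-vertex and a rim-edge-vertex, how many vertices lie at distance $1$, $2$, $3$; it just groups the total into the blocks $C(u\sim u)+2C(u\sim v)+C(v\sim v)$ rather than into per-orbit row sums, so for instance your $C(u_i)=\frac{4n+13}{8}$ is exactly the sum of the rim-vertex contributions to the paper's $C(u\sim u)$ and $C(u\sim v)$. Your version also justifies the distance-$3$ claims and pinpoints where $n>4$ is needed (distinctness of $r_{i-2}$ and $r_{i+2}$), which the paper leaves implicit.
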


\begin{proof}
Let us denote the set of vertices as $%
	\{u_{i}:i\in \{1,2,...,n\}\}\cup \{v_{j}:j\in \{1,2,...,\\2n\}\}$. Here,
	vertex $u_{1}$ will be the central vertex of the graph $W_{1,n}$. Let the
	set $\{v_{j}:j\in \{1,2,...,2n\}\}$ represent the vertices that were later
	added and associated with edges to form the middle graph structure. The
	first $n$ elements of this set create an induced subgraph of $K_{n}$, while
	the last $n$ elements create an induced subgraph of $C_{n}$. Each vertex in
	these two subgraphs is connected to every other vertex by two edges. To obtain the closeness value, we can relate the vertices of the graph as follows:
	
	\begin{itemize}
		\item The relationship among vertices $u_{i}$, $1\leq i\leq n$, total
		closeness value represented as $C(u\sim u)$.
		
		\item The relationship among vertices $v_{j}$, $1\leq j\leq 2n$, total
		closeness value represented as $C(v\sim v)$.
		
		\item The relationship between vertices $u_{i}$ and $v_{j}$ and total
		closeness value represented as $C(u\sim v)$.
	\end{itemize}
	For the first case, the central vertex $u_{1}$ has distance $2$ to all $n$ vertices. For every vertex $u_i$ other than $u_{1}$, the distances to its three adjacent vertices in $W_{1,n}$ are $2$ in the middle form, while the distances to the remaining $n-3$ vertices are $3$. Accordingly, we obtain	$C(u \sim u) = \frac{n^{2} + 5n}{8}.$\\
	For the second case, the closeness value between the central vertex and the vertices $v_j$ is $\frac{3n}{4}$. For every vertex $u_i$ other than $u_1$, the distances to three of the vertices $v_j$ are $1$, to $n+1$ of them are $2$, and to the remaining $n-4$ vertices are $3$. Accordingly, we obtain	$C(u \sim v)= \frac{n(3n+16)}{8}.$\\
	For the third case, consider the vertices $v_j$ as those belonging to the structures $K_n$ and $C_n$. The total closeness value obtained from the $n$ vertices in $K_n$ is $\frac{3n^2}{4}$, while the total closeness value obtained from the $n$ vertices in $C_n$ is $\frac{n(3n+11)}{8}$. Accordingly, we have $C(v \sim v) = \frac{3n^2}{4} + \frac{n(3n+11)}{8}.$ Thus, combining all three cases, we obtain
		\[
		C(M(W_{1,n})) = C(u \sim u) + 2C(u \sim v) + C(v \sim v) = 2n^2 + 6n.
		\]
\end{proof}
\begin{theorem}\label{C(M(Kn,m))}
	Let $G$ be a complete bipartite graph with $n+m$ vertices, $K_{n,m}$ where $n,m>1$. The closeness value of the middle complete bipartite graph is\\
	\[C(M(K_{n,m}))=\frac{(m+n)(m+n-1)+mn(4+6(m+n)+2mn)}{8}. \]
\end{theorem}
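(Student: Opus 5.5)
The plan is to follow the splitting scheme used for the middle path, cycle, star and complete graphs above. I treat the closeness as the sum over ordered pairs, which is the convention behind $C(K_n)=n(n-1)/2$. Label the vertices of $M(K_{n,m})$ as $a_1,\dots,a_n$ and $b_1,\dots,b_m$ for the two colour classes of $K_{n,m}$, and write $e_{ij}$ for the vertex of $M(K_{n,m})$ representing the edge $a_ib_j$. There are $nm$ such edge-vertices. I then compute $C(u\sim u)$, $C(v\sim v)$ and $C(u\sim v)$ separately and use
\[
C(M(K_{n,m}))=C(u\sim u)+C(v\sim v)+2C(u\sim v).
\]

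\textbf{Distances.} In a middle graph, original vertices are pairwise non-adjacent.
\begin{itemize}
\item For $a_i,b_j$ the path $a_i e_{ij} b_j$ has length $2$, so their distance is $2$.
\item For $a_i,a_{i'}$ with $i\neq i'$ there is no common edge. The path $a_i e_{i1} e_{i'1} a_{i'}$ has length $3$, so the distance is $3$. The same holds for $b_j,b_{j'}$.
\item Two edge-vertices are adjacent exactly when the edges share an endpoint. So $e_{ij}$ has $(n-1)+(m-1)$ neighbours among the edge-vertices. For disjoint edges, $e_{ij}e_{ij'}e_{i'j'}$ shows the distance is $2$.
\item Finally, $a_i$ is adjacent to the $m$ vertices $e_{ij}$. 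For $i'\neq i$ it is at distance $2$ from $e_{i'j}$, via $e_{ij}$. Symmetrically, $b_j$ is adjacent to $n$ edge-vertices and at distance $2$ from the others.
\end{itemize}

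\textbf{Summing.} With ordered pairs, the three pieces are
\[
C(u\sim u)=\frac{n(n-1)+m(m-1)}{8}+\frac{2nm}{4},
\]
\[
C(v\sim v)=nm\left(\frac{n+m-2}{2}+\frac{nm-n-m+1}{4}\right),
\]
\[
C(u\sim v)=n\left(\frac{m}{2}+\frac{m(n-1)}{4}\right)+m\left(\frac{n}{2}+\frac{n(m-1)}{4}\right)=nm+\frac{nm(n+m-2)}{4}.
\]
Multiply the total by $8$.
\begin{itemize}
\item The terms not divisible by $nm$ give $n(n-1)+m(m-1)+4nm=(m+n)(m+n-1)+2nm$.
\item The coefficients of $nm$ are $4(n+m-2)+2(nm-n-m+1)$ from $C(v\sim v)$ and $16+4(n+m-2)$ from $2C(u\sim v)$. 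They add to $2nm+6(n+m)+2$.
\end{itemize}
Together with the extra $2nm$ from the first item, this gives $(m+n)(m+n-1)+mn(4+6(m+n)+2mn)$, as claimed.

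\textbf{Main obstacle.} The main care needed is in the distance claims, especially showing that two same-side original vertices are at distance exactly $3$ and not $2$. This holds because the only neighbours of an original vertex are its incident edge-vertices, and two same-side vertices share no edge. The hypothesis $n,m>1$ guarantees that same-side pairs and disjoint edge pairs actually occur, so all the counts above are valid.
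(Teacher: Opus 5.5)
Your proof is correct and is essentially the paper's argument. Both rest on the fact that every distance in $M(K_{n,m})$ is $1$, $2$ or $3$ and then count them: the paper counts per vertex (each original vertex on the $n$-side sees $m$ vertices at distance $1$, $n-1$ at distance $3$ and $nm$ at distance $2$, and each edge-vertex sees $m+n$ at distance $1$ and $mn-1$ at distance $2$), whereas you regroup the same double sum into the pair classes $u\sim u$, $v\sim v$, $u\sim v$, and you also justify the distance values that the paper only asserts.
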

\begin{proof}
	In order to prove the theorem, \ we can split the vertex set of middle $%
	K_{n,m}$ graph. Let us denote the set of vertices as $\{u_{i}:i\in
	\{1,2,...,n+m\}\}\cup \{v_{j}:j\in \{1,2,...,nm\}\}$ where the set $%
	\{v_{j}:j\in \{1,2,...,2n\}\}$ represent the vertices that were later added
	and associated with edges to form the middle graph structure and $%
	\{u_{i}:i\in \{1,2,...,n+m\}\}$ represent the vertices of $K_{n,m}$ graph
	with partite sets of sizes $n$ and $m.$ Based on this, we can examine the
	state of vertices under three cases:
	
	\begin{itemize}
		\item There are $m$ vertices from the set $\{u_{i}:i\in \{1,2,...,n+m\}\}$
		at distance one from $n$ vertices, and $(n-1)$ vertices at distance three.
		The remaining $nm$ vertices are at distance two.
		
		\item There are $n$ vertices from the set $\{u_{i}:i\in \{1,2,...,n+m\}\}$
		at distance one from $m$ vertices, and $(m-1)$ vertices at distance three.
		The remaining $nm$ vertices are at distance two.
		
		\item For $nm$ vertices from the set $\{v_{j}:j\in \{1,2,...,nm\}\}$, there
		are $(m+n)$ vertices at distance one, and $(mn-1)$vertices at distance two.
		There are no vertices at distance three. Therefore, the total closeness
		value is calculated as;
		\[C(M(K_{n,m}))=\frac{(m+n)(m+n-1)+mn(4+6(m+n)+2mn)}{8}.\] 	
	\end{itemize}
\end{proof}

\section{The Vertex Residual Closeness of Some Special Middle Graphs}

In this section, calculations related to the more sensitive parameter,
residual closeness, which is derived from the closeness parameter, will be
conducted.

\begin{theorem}
	Let $G$ be a path graph with $n$ vertices, $P_{n}$. The residual closeness
	value of the middle path graph is
\[
R(M(P_{n}))= 
\begin{cases}
	2C(M(P_{\frac{n}{2}})) & ,\text{if }n\text{ is even} \\ 
	C(M(P_{\frac{n-1}{2}}))+C(M(P_{\frac{n+1}{2}})) & ,\text{if }n\text{ is odd}
\end{cases}%
\]
\end{theorem}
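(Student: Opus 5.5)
We compute $C_k$ for every possible removed vertex $v_k$ of $M(P_n)$ and minimize. We use the labelling from the proof of Theorem \ref{C1}: $u_1,\dots,u_n$ are the original vertices, $v_1,\dots,v_{n-1}$ are the edge-vertices, $v_j$ is adjacent to $u_j$, $u_{j+1}$ and to $v_{j\pm 1}$, and each $u_i$ is adjacent only to $v_{i-1}$ and $v_i$. There are two types of removal, and the claim is that the minimum is attained by removing a ``middle'' edge-vertex.

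\textbf{Removing an edge-vertex $v_j$.} The vertices $u_1,\dots,u_j$ can reach $u_{j+1},\dots,u_n$ only through $v_j$, since every $u$--$u$ path must pass through the $v$'s. Hence $M(P_n)-v_j$ is the disjoint union of two components. The component on $\{u_1,\dots,u_j,v_1,\dots,v_{j-1}\}$ is exactly $M(P_j)$, and the other component is exactly $M(P_{n-j})$. Pairs in different components contribute $1/2^{\infty}=0$. So
\[
C_{v_j}=C(M(P_j))+C(M(P_{n-j})).
\]
By Theorem \ref{C1} this equals
\[
7n-32+18\left(2^{-j}+2^{-(n-j)}\right),
\]
with the small cases $j=1$ or $n-j=1$ checked directly: $M(P_1)=K_1$ has closeness $0$, which agrees with $7-16+9$. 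The function $j\mapsto 2^{-j}+2^{-(n-j)}$ is convex and symmetric about $n/2$. It is therefore minimized at $j=n/2$ when $n$ is even, and at $j=(n\pm1)/2$ when $n$ is odd. This yields exactly the two expressions in the statement.

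\textbf{Removing an original vertex $u_i$.} Here I will show that no distance between the remaining vertices changes. Any shortest path through $u_i$ enters and leaves via $v_{i-1}$ and $v_i$, which are adjacent. Such a path can therefore be shortened, or kept the same length, by replacing $v_{i-1}u_iv_i$ with the edge $v_{i-1}v_i$. Consequently
\[
C_{u_i}=C(M(P_n))-2C(u_i),
\]
where $C(u_i)=\sum_{w\neq u_i}2^{-d(u_i,w)}$ in $M(P_n)$. I then bound $C(u_i)$ by a constant. The vertex $u_i$ has at most two neighbours, at most four vertices at distance $2$ (namely $v_{i\pm 1}$ and $u_{i\pm1}$, up to indexing), and at most four at each larger distance. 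This gives $C(u_i)\le 1+1+4\sum_{t\ge3}2^{-t}=5/2$, hence $C_{u_i}\ge C(M(P_n))-5$.

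\textbf{Comparing the two cases (main obstacle).} The remaining step is to check that
\[
C(M(P_n))-C_{v_{\lfloor n/2\rfloor}}=16+18\cdot 2^{-n}-18\left(2^{-\lfloor n/2\rfloor}+2^{-\lceil n/2\rceil}\right)
\]
is at least $5$, so that removing a middle edge-vertex is worse than removing any $u_i$. For moderately large $n$ this difference is close to $16$, so the inequality holds. For small $n$ the exponential terms matter, and there I would verify directly (for $n\le 5$ or so) that $2C(u_i)$ is below the drop caused by removing the middle edge-vertex. If the crude bound $5/2$ is not sharp enough for some small $n$, I would compute $C(u_i)$ exactly in that case instead.
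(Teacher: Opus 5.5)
You follow the paper's route: delete the middle edge-vertex, note that $M(P_n)-v_j$ is the disjoint union of $M(P_j)$ and $M(P_{n-j})$, and apply Theorem~\ref{C1}. You go further than the paper, which only asserts that an interior edge-vertex removal is optimal and that the middle one is best. Your closed form $C_{v_j}=7n-32+18(2^{-j}+2^{-(n-j)})$ with convexity settles the choice of $j$. Your observation that deleting $u_i$ changes no distances, so $C_{u_i}=C(M(P_n))-2C(u_i)$, is what lets you justify the minimum.

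The one step that fails as written is your bound on $C(u_i)$. In fact $1+1+4\sum_{t\ge3}2^{-t}=3$, not $5/2$. The inequality $C(u_i)\le 5/2$ is false: for $n=7$, $i=4$ one gets $C(u_4)=21/8$. So $C_{u_i}\ge C(M(P_n))-5$ does not hold for larger $n$.

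The repair is easy. On the left, $u_i$ sees $v_{i-k}$ at distance $k$ and $u_{i-k}$ at distance $k+1$, and symmetrically on the right. Hence
\[
C(u_i)=3-\frac{3}{2}\left(2^{-(i-1)}+2^{-(n-i)}\right)<3 .
\]
For $n\ge 4$ you have $\lfloor n/2\rfloor\ge 2$. The drop from removing the middle edge-vertex is then $16+18\cdot 2^{-n}-18\left(2^{-\lfloor n/2\rfloor}+2^{-\lceil n/2\rceil}\right)>16-9=7>6>2C(u_i)$.

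For $n=2,3$ use the exact values:
\begin{itemize}
\item for $n=2$, $C_{v_1}=0<1=C_{u_1}$;
\item for $n=3$, $C_{v_1}=5/2$, while $C_{u_1}=5$ and $C_{u_2}=17/4$.
\end{itemize}
With this correction your argument is complete.
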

\begin{proof}
	Let us notate the set of vertices of the middle path graph as $%
	\{u_{1},u_{2},\ldots ,u_{n},\\v_{1},v_{2},\ldots ,v_{n-1}\}$. Here, $u_{i}$,
	for $1\leq i\leq n$, represents the vertices of the $P_{n}$ graph, and $%
	v_{j} $, for $1\leq j\leq n-1$, are vertices added later according to the
	definition of the middle graph. Considering the structure of the path middle
	graph, we can obtain the residual closeness value in four different ways.
	The vertices to be removed from the graph are examined in four types as
	follows:
	\begin{itemize}
\item The first type of removal is to remove either the $u_{1}$ or $u_{n}$
		vertex,
		
\item The second type of removal is to remove one of the vertices from $%
		\{u_{2},\ldots ,u_{n-1}\}$,
		
\item The third type of removal is to remove either the $v_{1}$ or $%
		v_{(n-1)} $ vertex, and
		
\item The fourth type of removal is to remove one of the vertices from $%
		\{v_{2},\ldots ,v_{(n-2)}\}$ in the graph.
	\end{itemize}
	
Among all possible vertex removals, we show that  the fourth type of removal yields minimum, as the graph becomes disconnected and certain
	vertex-to-vertex connections are lost. To achieve the minimum value, if $n$
	is even, it is logical to remove the vertex $v_{n/2}$ from the set $%
	\{v_{2},\ldots ,v_{n-2}\}$. However, if $n$ is odd, it is reasonable to
	remove a vertex from $\{v_{(n-1)/2},$ $v_{(n+1)/2}\}$. In doing so, more
	vertex relationships will be severed. Once the closeness is recalculated
	based on the modified distances due to vertex removal, the residual
	closeness value for the path middle graph is determined as follows:
	
	Case1: When $n$ is even and the vertex $v_{n/2}$ is removed from the graph,
	the graph will split into two $M(P_{n/2})$ graphs. Based on the result from Theorem \ref{C1}, the residual closeness value can
	be computed as $2C(M(P_{n/2}))$.
	
	Case 2: When $n$ is odd, and without loss of generality, the vertex $%
	v_{(n-1)/2}$ is removed, the graph will split into two disconnected
	components: $M(P_{(n-1)/2})$ and $M(P_{(n+1)/2})$. Applying the result from
	Theorem \ref{C1}, the residual closeness value is
	obtained from $C(M(P_{\frac{n-1}{2}}))+C(M(P_{\frac{n+1}{2}}))$.
\end{proof}

\begin{theorem}
	Let $G$ be a cycle graph with $n$ vertices, $C_{n}$. The vertex residual
	closeness value of the middle cycle graph is
\[
R(M(C_{n}))=7n-16+\frac{18}{2^{n}}. 
\]
\end{theorem}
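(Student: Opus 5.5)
By the symmetry of $M(C_n)$, every vertex deletion is one of two kinds: deleting an original vertex $u_i$ or deleting an edge-vertex $v_j$. The rotations of $C_n$ act transitively on the $u_i$ and on the $v_j$, so it suffices to compute $C_k$ for one representative of each kind and take the smaller value. I use the labelling of Theorem \ref{M(Cn)}, with $v_i$ the edge-vertex of the edge $u_iu_{i+1}$ (indices mod $n$).

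\textbf{Edge-vertex deletion.} Delete $v_n$, the edge-vertex of $u_nu_1$. The graph $M(C_n)-v_n$ has vertex set $\{u_1,\dots,u_n,v_1,\dots,v_{n-1}\}$. Its edges are exactly the incidences $u_iv_i$, $u_{i+1}v_i$ and the adjacencies $v_iv_{i+1}$ for $1\le i\le n-2$. This is precisely the middle graph of the path $u_1u_2\cdots u_n$, so $M(C_n)-v_n\cong M(P_n)$. Since $C_k$ is computed from distances in the remaining graph, Theorem \ref{C1} gives
\[
C_{v_n}=C(M(P_n))=7n-16+\frac{18}{2^n}.
\]

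\textbf{Original-vertex deletion.} Delete $u_i$. Its only neighbours are $v_{i-1}$ and $v_i$, and these are adjacent to each other. Any shortest path through $u_i$ could therefore be shortened by using the edge $v_{i-1}v_i$ directly, so no geodesic between two other vertices passes through $u_i$. Hence all distances among the remaining vertices are unchanged, and
\[
C_{u_i}=C(M(C_n))-2C(u_i),
\]
where $C(u_i)=\sum_{x\neq u_i}2^{-d(u_i,x)}$ is computed in $M(C_n)$.

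Next I estimate $C(u_i)$. It has at most two $v$'s at each distance $1,2,3,\dots$, contributing at most $2(\tfrac12+\tfrac14+\cdots)<2$. It has at most two $u$'s at each distance $2,3,\dots$, contributing at most $2(\tfrac14+\tfrac18+\cdots)=1$. So $C(u_i)<3$. The same per-vertex bounds are implicit in the case computations of Theorem \ref{M(Cn)}, which I will reuse to write $C(u_i)$ exactly.

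\textbf{Comparison, the main obstacle.} It remains to show $C(M(C_n))-2C(u_i)\ge 7n-16+18/2^n$. By Theorem \ref{M(Cn)} with the formula for $C(C_n)$, $C(M(C_n))$ equals $7n$ minus terms of order $n2^{-n/2}$. Since $2C(u_i)<6$, we get $C_{u_i}>7n-6-O(n2^{-n/2})$, which exceeds $7n-16+18/2^n$ once $n$ is moderately large. The delicate part is making this rigorous for every $n$. I will do it by plugging the exact expressions of Theorem \ref{M(Cn)} into both parity cases and verifying the resulting inequality directly; the finitely many small $n$ where the $n2^{-n/2}$ term matters can be checked by hand.

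It follows that the minimum over all deletions is attained at an edge-vertex, and $R(M(C_n))=C(M(P_n))=7n-16+18/2^n$.
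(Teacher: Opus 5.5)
Your proposal gets the value the same way the paper does: deleting an edge-vertex leaves a graph isomorphic to $M(P_n)$, and Theorem~\ref{C1} gives $7n-16+18/2^n$.

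Where you go beyond the paper is minimality. The paper only says that removing some $v_j$ ``clearly'' gives the smaller value, and never examines the deletion of a $u_i$. Your observation that $u_i$ is simplicial settles this properly: its two neighbours $v_{i-1},v_i$ are adjacent, so no geodesic passes through it, and therefore $C_{u_i}=C(M(C_n))-2C(u_i)$ exactly. That turns the paper's claim into a checkable inequality, which is a genuine improvement.

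You leave the final comparison as a plan, but it does go through:
\begin{itemize}
\item The crude bound $C(u_i)<3$ already suffices for every $n\ge 4$. At the tightest case $n=4$, $C(M(C_4))-6=27/2>105/8=C(M(P_4))$.
\item The crude bound fails at $n=3$, since $C(M(C_3))-6=6<29/4=C(M(P_3))$. There you need the exact value $C(u_1)=7/4$, which gives $C_{u_1}=17/2>29/4$.
\item In general, your per-distance counts give exact values: $C(u_i)=3-\frac{7}{2}\,2^{-n/2}$ for even $n$, and $C(u_i)=3-5\cdot 2^{-(n+1)/2}$ for odd $n$.
\item Hence $C_{u_i}-C(M(P_n))$ equals $10-(\frac{17n}{2}-7)2^{-n/2}-18\cdot 2^{-n}$ for even $n$, and $10-(12n-10)2^{-(n+1)/2}-18\cdot 2^{-n}$ for odd $n$.
\item Both expressions increase with $n$ within each parity. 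They are already positive at $n=4$ and $n=3$, with values $17/8$ and $5/4$.
\end{itemize}
So your argument establishes the theorem for all $n\ge 3$, with a justification the paper's proof lacks.
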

\begin{proof}
	Let us notate the set of vertices of the middle cycle graph as $%
	\{u_{1},u_{2},\ldots ,u_{n},\\v_{1},v_{2},\ldots ,v_{n}\}$. Here, $u_{i}$, for 
	$1\leq i\leq n$, represents the vertices of the $C_{n}$ graph, and $v_{j}$,
	for $1\leq j\leq n$, are vertices added later based on the definition of the
	middle graph. Considering the structure of the middle cycle graph, there
	will be two different types of removal as follows:
	
	\begin{itemize}
		\item the first type is to remove one vertex from $\{u_{1},u_{2},\ldots
		,u_{n}\}$, and
		
		\item the second type is to remove one vertex from $\{v_{1},v_{2},\ldots
		,v_{n}\}$ in the graph.
	\end{itemize}
	
	The second type of removal clearly leads to a minimized value. When a vertex
	is removed from the set $\{v_{1},v_{2},\ldots ,v_{n}\}$, the remaining
	structure transforms into a path graph with $n$ vertices. Using the result
	derived from Theorem \ref{C1}, we can determine the
	residual closeness value for the cycle graph. Consequently, by removing one
	vertex from $\{v_{1},v_{2},\ldots ,v_{n}\}$, residual closeness value of the
	middle cycle graph is calculated, as the resulting structure resembles a
	path graph with $n$ vertices, following the conclusion of Theorem 
	\ref{C1}. The structure becomes isomorphic to a path
	graph structure with $n$ vertices. In this case, using the result obtained from Theorem \ref{C1}, we will obtain the residual
	closeness value for the cycle graph as $R(M(C_{n}))=C(M(P_{n}))=7n-16+\frac{%
		18}{2^{n}}$.
\end{proof}

\begin{theorem}
	Let $G$ be a star graph with $n+1$ vertices, $S_{1,n}$. The residual
	closeness value of the middle star graph is
\[
R(M(S_{1,n}))=\frac{9n^{2}-7n-2}{8}. 
\]
\end{theorem}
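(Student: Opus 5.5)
The plan is to compute $C_k$ separately for each of the three kinds of vertex that can be removed from $M(S_{1,n})$ and then take the minimum. I will reuse the labelling from the proof of Theorem \ref{M(S)}. The central vertex is $u_1$, the leaves are $u_2,\dots,u_{n+1}$, and $v_j$ is the edge vertex of $u_1u_{j+1}$.

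Recall the structure of $M(S_{1,n})$. The vertices $v_1,\dots,v_n$ induce a $K_n$. Each $v_j$ is adjacent to $u_1$ and to its leaf $u_{j+1}$, and each leaf has degree one. By symmetry there are only three removal types: the centre $u_1$, a leaf $u_{j+1}$, or an edge vertex $v_j$.

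Rather than recompute everything, I will start from the total $C(M(S_{1,n}))=\frac{9n^2+11n}{8}$ of Theorem \ref{M(S)}. From it I subtract the contributions of pairs that are lost or lengthened. Throughout, sums are over ordered pairs, as in the paper, so each unordered pair is counted twice.

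\textbf{Removing a leaf $u_{j+1}$.} No other distance changes, because the leaf lies on no geodesic. The only loss is the leaf's own row, which is
\[
\frac12+\frac14+\frac{n-1}{4}+\frac{n-1}{8}=\frac34+\frac{3(n-1)}{8}.
\]
Doubling this for ordered pairs gives
\[
C_k=\frac{9n^2+11n}{8}-\frac{3n+3}{4}=\frac{9n^2+5n-6}{8}.
\]

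\textbf{Removing the centre $u_1$.} Here the distances change, so I will compute directly. Only $u_1$ connected distinct $v$'s in one step, but they are already mutually adjacent, so distances inside the $K_n$ are unchanged. A leaf stays at distance $1$ from its own $v$, $2$ from the other $v$'s, and $3$ from the other leaves. This gives
\[
C_k=\frac{n(n-1)}{2}+n+\frac{n(n-1)}{2}+\frac{n(n-1)}{8}=\frac{9n^2-n}{8}.
\]

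\textbf{Removing an edge vertex $v_j$.} Its leaf $u_{j+1}$ becomes isolated, so all pairs involving $u_{j+1}$ or $v_j$ are lost. I must check that no other distance grows. A leaf $u_{i+1}$ with $i\neq j$ still reaches $u_1$ and every remaining $v$ through $v_i$. Also, $u_1$ is adjacent to all remaining $v$'s. So all other distances are unchanged. The lost single-direction contributions are:
\begin{itemize}
\item $\frac34+\frac{3(n-1)}{8}$ from $u_{j+1}$, exactly as in the leaf case;
\item $\frac12+\frac{n-1}{2}+\frac{n-1}{4}$ from $v_j$'s pairs other than with $u_{j+1}$.
\end{itemize}
Their sum is $\frac{9n+1}{8}$, and doubling it gives
\[
C_k=\frac{9n^2+11n}{8}-\frac{18n+2}{8}=\frac{9n^2-7n-2}{8}.
\]

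Finally I compare the three values. For $n\ge1$ we have $9n^2-7n-2<9n^2-n$ and $9n^2-7n-2<9n^2+5n-6$, so the minimum is attained by removing an edge vertex. This gives $R(M(S_{1,n}))=\frac{9n^2-7n-2}{8}$.

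The main obstacle is the edge-vertex case. There I must certify that deleting $v_j$ lengthens no distance among the surviving vertices apart from isolating $u_{j+1}$. I must also count the lost pairs without double-counting the pair $u_{j+1}v_j$. This is why I list the contributions of $u_{j+1}$ and $v_j$ separately, excluding their mutual pair from the second list.
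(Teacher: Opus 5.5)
Your proposal is correct and follows the paper's route. You split removals into centre, leaf and edge-vertex types, and get the minimum by subtracting the lost contribution $\frac{9n+1}{4}$ from $C(M(S_{1,n}))=\frac{9n^2+11n}{8}$ after deleting an edge vertex. Where the paper only asserts intuitively that the edge-vertex removal is minimal, you compute the other two values, $\frac{9n^2-n}{8}$ and $\frac{9n^2+5n-6}{8}$, and verify that no surviving distance increases, which makes the argument complete.
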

\begin{proof}
	Let the vertex set of the middle star graph denoted by $\{u_{1},u_{2},\ldots
	,u_{(n+1)},\\v_{1},v_{2},\ldots ,v_{n}\}$. Here, $u_{i}$, for $1\leq i\leq n+1$%
	, represents the vertices of the $S_{1,n}$ graph where $u_{1}$ is the
	central vertex and $v_{j}$, for $1\leq j\leq n$, are vertices added later
	according to the definition of the middle graph. Let us consider the
	structure of the middle star graph, there will be three types of removal as
	follows:
	
	\begin{itemize}
		\item The first type of removal is the central vertex $u_{1}$,
		
		\item The second type of removal is choosing one vertex from $\{u_{2},\ldots
		,u_{n+1}\}$, and
		
		\item The third type of removal is choosing one vertex from $%
		\{v_{1},v_{2},\ldots ,v_{n}\}$ in the graph.
	\end{itemize}
	
	Approaching intuitively, the third type of removal will minimize the
	closeness value after removal. Therefore, the removal of one vertex from the
	set $\{v_{1},v_{2},\ldots ,v_{n}\}$ will result in disconnected component in
	the graph. However, considering the distances of the vertices $%
	\{v_{1},v_{2},\ldots ,v_{n}\}$ to the other vertices, there will be a
	reduction in the closeness value $C(M(S_{1,n}))$ obtained from Theorem 
	\ref{M(S)} by $(\frac{9n+1}{4})$.
Therefore, the residual closeness value for the middle star graph is obtained as:
\[
R(M(S_{1,n}))=\frac{9n^{2}-7n-2}{8}. 
\]
\end{proof}
\begin{theorem}
	Let $G$ be a wheel graph with $(n+1)$ vertices, $W_{1,n}, n\geq6$. The residual
	closeness value of the middle wheel graph is%
	\[
	R(M(W_{1,n}))=\frac{16n^{2}+28n+3}{8}. 
	\]
\end{theorem}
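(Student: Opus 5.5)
Following the pattern of the previous residual closeness proofs, I would label $M(W_{1,n})$ as in Theorem \ref{MW1n}: the hub $u_1$, the rim vertices $u_2,\dots,u_{n+1}$, the $n$ spoke-edge vertices forming an induced $K_n$, and the $n$ rim-edge vertices forming an induced $C_n$. By the symmetry of the wheel there are only four removal types: the hub $u_1$, a rim vertex $u_i$, a spoke vertex, and a rim-edge vertex. For each type $k$ I would write $C_k = C(M(W_{1,n})) - \Delta_k$, where $C(M(W_{1,n}))=2n^2+6n$ comes from Theorem \ref{MW1n}. The loss $\Delta_k$ has two parts:
\begin{itemize}
\item[(a)] the contribution $2C(x)$ of the removed vertex $x$ itself, counted once in each ordered direction;
\item[(b)] the total increase of the terms $2^{-d}$ for pairs whose geodesics all pass through $x$.
\end{itemize}

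For part (a) I would reuse the per-vertex distance profiles already listed in the proof of Theorem \ref{MW1n}. A rim vertex $u_i$ has three neighbours. A spoke vertex is adjacent to the $n-1$ other spoke vertices, two rim-edge vertices, $u_1$ and its rim endpoint, and lies at distance $2$ from almost everything else. A rim-edge vertex is adjacent to two rim-edge vertices, two spoke vertices and two rim vertices. The hub $u_1$ has $n$ neighbours, all of them spoke vertices.

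For part (b) I would examine only the pairs whose distance can change. Removing a rim vertex $u_i$ hurts little, because it is a leaf-like vertex of degree $3$ that lies on few geodesics. Removing $u_1$ changes nothing, since $u_1$ is simplicial: its neighbourhood is the clique $K_n$. Removing a spoke vertex $e$ might lengthen paths from its rim endpoint $u_i$ to $u_1$ and to other spoke vertices, but $u_i$ still reaches the clique through its two rim-edge vertices in two steps. So the changes are at distance $2\to3$ at most, for $O(n)$ pairs. Removing a rim-edge vertex mainly affects distances around the $C_n$ part, and since the cycle is short-circuited through the clique, these changes are again bounded.

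The candidates are then the rim-edge removal and the spoke removal, because those vertices have the largest closeness; the spoke vertices have degree about $n+3$. A spoke vertex therefore loses roughly $2\bigl((n+3)/2 + (\text{rest})/4\bigr)$, about $3n/2 + O(1)$ from part (a). I expect the minimum to come from removing a spoke vertex. I would compute its $\Delta$ exactly, get $2n^2+6n-\Delta = (16n^2+28n+3)/8$, that is $\Delta = (20n-3)/8$, and compare it with the other three types.

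The main obstacle is part (b) for the spoke removal: for each pair I have to check carefully whether an alternative geodesic of the same length survives. The hypothesis $n\ge 6$ ensures that distance-$3$ cases do not collapse and that the formulas of Theorem \ref{MW1n} apply.

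One warning: $\Delta=(20n-3)/8$ is fractional with denominator $8$. Such a loss can arise only if some distance increases to $3$ or more, so I expect to find pairs, such as $u_i$ versus far rim vertices, whose distance changes from $2$ to $3$ or from $3$ to $4$. I would tabulate these before trusting the closed form.
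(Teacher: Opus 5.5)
\textbf{Gap: the spoke-removal change $T$ is never computed.} Your scheme, $C_k=C(M(W_{1,n}))-2C(x)+T$ over the removal types with the spoke vertex as minimiser, is the same as the paper's. However, the value $\Delta=\frac{20n-3}{8}$ is back-solved from the target, and your one concrete claim about this removal, that all distance changes are $2\to3$, is false.

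Let $s$ be the spoke vertex of $u_1u_i$. In $M(W_{1,n})-s$ the vertex $u_i$ keeps only its two rim-edge neighbours. It reaches the spoke vertices of its two rim neighbours in two steps, but $u_1$ and the other $n-3$ spoke vertices only in three, giving $n-2$ pairs going $2\to3$. Consequently the far vertices it used to reach in three steps through $s$ and the clique are now at distance $4$:
\begin{itemize}
\item the $n-5$ rim vertices other than $u_i$ and the two nearest on each side;
\item the $n-6$ rim-edge vertices other than the six closest to $u_i$ (those six are still reached along the rim in at most three steps).
\end{itemize}
These $2n-11$ pairs going $3\to4$ produce the denominator $8$ that your closing warning anticipates but leaves unresolved. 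Once both orders are counted, $2\to3$ changes contribute only multiples of $\frac14$, so some distance must reach $4$, not merely $3$.

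You also have to justify that nothing else changes. The vertex $u_i$ is simplicial, hence never an interior vertex of a geodesic. Every other non-adjacent pair of neighbours of $s$ has a second common neighbour, for example $u_1$ and a rim-edge vertex at $u_i$, or another spoke vertex and such a rim-edge vertex. This gives
\[
T=-2\left(\frac{n-2}{8}+\frac{2n-11}{16}\right)=-\frac{4n-15}{8},\qquad 2C(s)=\frac{4n+3}{2},
\]
so $\Delta=\frac{20n-3}{8}$, matching the paper's count. Note that part (a) alone is about $2n$, not $\frac{3n}{2}+O(1)$, since all $2n-3$ non-neighbours of $s$ are at distance $2$.

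\textbf{Gap: the comparison with the other removal types is only announced.} The paper likewise just calls it obvious, but your heuristic for choosing candidates points at the wrong competitor.
\begin{itemize}
\item Rim vertices are simplicial as well, since their three neighbours are pairwise adjacent. They lie on no geodesic, so $\Delta=2C(u_i)=\frac{4n+13}{4}$.
\item For a rim-edge vertex, exactly four pairs of its neighbours lose their only common neighbour, each going $2\to3$: its two rim endpoints, each endpoint with the rim-edge neighbour not incident to it, and its two rim-edge neighbours. So $\Delta=\frac{4n+21}{4}+1=\frac{4n+25}{4}$.
\item The closest competitor is the hub, not the rim-edge vertex. $C(u_1)=n$ exceeds the closeness $\frac{4n+21}{8}$ of a rim-edge vertex for $n\ge 6$, and $C_{u_1}=2n^2+4n$ is only $\frac n2-\frac38$ above the claimed minimum.
\end{itemize}
With these four explicit values the spoke removal is the minimum for all $n\ge6$. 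Your write-up needs this argument in place of the expectations it currently states.
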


\begin{proof}
	Let the vertex set of the middle wheel graph denoted by $\{u_{1},u_{2},%
	\ldots ,u_{(n+1)},\\v_{1},v_{2},\ldots ,v_{2n}\}$. Here, $u_{i}$, for $1\leq
	i\leq n+1$, represents the vertices of the $W_{1,n}$ graph where $u_{1}$ is
	the central vertex and $v_{j}$, for $1\leq j\leq 2n$, are vertices added
	later according to the definition of the middle graph. For ease of
	expression, let us divide the set of vertices in the set of $\{v_{j}:1\leq
	j\leq 2n\}$ into two parts: the first $n$ vertices are adjacent to the
	central vertex, and the last $n$ vertices are those obtained from cycle form
	of the $W_{1,n}$ graph. Let us consider the structure of the middle star
	graph, there will be three types of removal as follows:
	
	\begin{itemize}
		\item The first type of removal is a vertex from $\{u_{i}:1\leq i\leq n+1\}$,
		
		\item The second type of removal is choosing one vertex from $%
		\{v_{1},v_{2},\ldots ,v_{n}\}$, and
		
		\item The third type of removal is choosing one vertex from $%
		\{v_{n+1},v_{n+1},\ldots ,v_{2n}\}$ in the graph.
	\end{itemize}
	
	When obtaining the residual closeness value, since the changes in the
	closeness value after removing an vertex are considered, the value obtained
	from the first bullet is simply derived by subtracting twice the closeness
	value of the removed vertex from $C(M(W_{1,n}))$. The change derived from
	the third bullet results in the distance between $v_{j+1}$ $j\in
	\{n,...2(n-1)\}$ and its neighbor $u_{i},$ increasing by $1$ after the
	removal. However, the value obtained from the second bullet is calculated as
	follows:
	
	Without loss of generality, the vertex $v_{1}$ is removed from the graph.
	Therefore, the distance between the vertex $u_{1}$ and $u_{2}$ increases
	from $2$ to $3,$ the distance between the vertex $u_{2}$ and $\{u_{5},...,$ $%
	u_{n-1}\}$ increases from $3$ to $4,$ the distance between the vertex $u_{2}$
	to $\{v_{3},...,v_{n-1}\}$ increases from $2$ to $3,$ and the distance
	between the vertex $u_{2}$ and $\{v_{n+4},..,v_{2n-3}\}$ increases from $3$
	to $4.$ Thus, the value after removal can be calculated as:%
	\[
	C(M(W_{1,n}))-2C(v_{1})+T 
	\]%
	where $C(v_{1})=\frac{4n+3}{4}$ and $T$ represents changing total closeness after removal and $T=2(-\frac{1}{2^{2}}+\frac{1%
	}{2^{3}}-\frac{n-5}{2^{3}}+\frac{n-5}{2^{4}}-\frac{n-3}{2^{2}}+\frac{n-3}{%
		2^{3}}-\frac{n-6}{2^{3}}+\frac{n-6}{2^{4}}).$ Hence, the total value is%
	\begin{equation*} 
		C(M(W_{1,n}))-\frac{20n-3}{8}.
	\end{equation*}%
	This is obvious that residual closeness value of $M(W_{1,n})$ comes from
	second bullet. When $C(M(W_{1,n}))=2n^{2}+6n$ substituted into equation
	above, the the residual closeness value is get as:%
	\[
	R(M(W_{1,n}))=\frac{16n^{2}+28n+3}{8}. 
	\]
\end{proof}

\begin{theorem}
	Let $G$ be a complete bipartite graph with $n+m$ vertices, $K_{n,m}$ where $n,m>1$. The	vertex residual closeness value of the middle complete bipartite graph is \\
	\[R(M(K_{n,m}))=\frac{(m+n)(m+n-11)+mn(6(m+n)+2mn)+5}{8}. \]
\end{theorem}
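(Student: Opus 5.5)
Since $R(M(K_{n,m}))=\min_k C_k$, I would compute $C_k$ for each type of removed vertex, using the labelling and distance description from the proof of Theorem \ref{C(M(Kn,m))}. Write $A$ and $B$ for the two partite sets of $K_{n,m}$, with $|A|=n$ and $|B|=m$, and write $e=ab$ for the middle-graph vertex of an edge $ab$. By symmetry there are only three removal types: a vertex of $A$, a vertex of $B$, or an edge-vertex $e=ab$. In each case the approach is the same: $C_k$ equals $C(M(K_{n,m}))-2C(x)$ for the removed vertex $x$, corrected by the closeness lost on pairs whose geodesics all passed through $x$. Rewriting the target shows that the claimed value is
\[
C(M(K_{n,m}))-\frac{10(m+n)+4mn-5}{8},
\]
so this is the drop I expect to find for the minimizing type.

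\emph{Removing an original vertex $u\in A$.} Its neighbours in $M(K_{n,m})$ are its $m$ edge-vertices, and these form a clique. Any geodesic through $u$ can therefore be shortcut, so no remaining distance changes. From the distance profile of $u$ (distance $1$ to $m$ edge-vertices, distance $2$ to the other edge-vertices and to $B$, distance $3$ to $A\setminus\{u\}$) we get
\[
2C(u)=\frac{8m+4mn+2n-2}{8}.
\]
A vertex of $B$ is handled symmetrically by swapping $n$ and $m$.

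\emph{Removing an edge-vertex $e=ab$.} Here $2C(e)=(m+n)+\frac{mn-1}{2}$. Several distances also increase, and I would check each class of pairs.
\begin{itemize}
\item The pair $(a,b)$: the shortest detour is $a,ab',a'b',a'b,b$, so the distance goes from $2$ to $4$, a loss of $2(\frac14-\frac1{16})=\frac38$.
\item The pairs $(a,a'b)$ with $a'\ne a$: these $n-1$ pairs go from distance $2$ to $3$, a loss of $\frac{n-1}{4}$.
\item The pairs $(b,ab')$ with $b'\ne b$: symmetrically, a loss of $\frac{m-1}{4}$.
\item Pairs of edge-vertices: these keep distance at most $2$, since $ab'$ and $a'b$ remain joined through $a'b'$ and $n,m>1$ guarantees such $a',b'$ exist.
\end{itemize}
Summing these losses with $2C(e)$ gives exactly $\frac{10(m+n)+4mn-5}{8}$.

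Finally I would compare the drops. For $u\in A$ the gap is
\[
\bigl(10(m+n)+4mn-5\bigr)-\bigl(8m+4mn+2n-2\bigr)=2m+8n-3>0,
\]
and symmetrically for $u\in B$. Hence the edge-vertex removal gives the minimum, which yields the stated formula. The main obstacle is the completeness of the distance-change bookkeeping for the edge-vertex removal: I need to verify that no other pairs, such as $a'$ to $ab'$ or $a$ to $b'$, change their distance. I would settle this by exhibiting an explicit geodesic that avoids $ab$ for each remaining pair class.
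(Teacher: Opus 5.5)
Your proposal is correct and takes essentially the same route as the paper. Like you, the paper splits the removals into original vertices (no distance changes, so the drop is $2C(u)$) and edge-vertices (drop $2C(v_j)$ plus the correction $T=-\frac{2(n-1)+2(m-1)+3}{8}$, coming from exactly the three pair classes you identify), and it reaches the same total drop $\frac{10(m+n)+4mn-5}{8}$.

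Your clique-neighbourhood argument, together with the explicit comparison $2m+8n-3>0$ and its symmetric counterpart, gives justifications the paper only asserts. In particular, it shows that the edge-vertex removal really is the minimizer.
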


\begin{proof}
	Let the vertex set of the middle complete bipartite graph denoted by $\{u_{1},u_{2},\\ \ldots,u_{(n+m)},v_{1},v_{2},\ldots ,v_{nm}\}$. Here, $u_{i}$%
	, for $1\leq i\leq n+m$, represents the vertices of the $K_{n,m}$ graph and $%
	v_{j}$, for $1\leq j\leq nm$, are vertices added later according to the
	definition of the middle graph. Let us consider the structure of the middle
	complete bipartite graph, there will be two types of removal as follows:
	
	\begin{itemize}
		\item The first type of removal is the any vertex from the set of $%
		\{u_{i}:1\leq i\leq n+m\}$,
		
		\item The second type of removal is any vertex from $\{v_{1},v_{2},\ldots
		,v_{nm}\}.$
	\end{itemize}
	
	The removal of the first type does not cause any change in the distances
	between the vertices. There is only a decrease in the value of $C(K_{n,m})$
	by twice the closeness value of the vertex $u_{i}$ that is removed from the
	graph. Let us assume that $n>m$ in this case, $C(i)$ $=\frac{4n+2nm+m-1}{8}$%
	. In the case of the second type of removal, there will be some reductions
	in the value of $C(K_{n,m})$ due to the changing distances between the
	vertices, aside from the decrease by $2C(v_{j})$. This change occurs as
	follows: without loss of generality, let the removed edge represent the
	vertex $v_{j}$ corresponding to the edge $(u_{1}u_{n+m})$. In this case, the
	distance between vertices $u_{1}$ and $u_{n+m}$ increases from $2$ to $4$.
	Additionally, the distance between $u_{1}$ and each $v_{j}$ vertex (where
	each edge represented in the middle graph \ as $v_{j}$ is adjacent to $%
	u_{n+m}$ in $K_{n,m})$ increases from $2$ to $3$. Similarly, the distance
	between $u_{n+m}$ and each vertex of $v_{j}$ (where each edge represented in
	the middle graph \ as $v_{j}$ is adjacent to $u_{1}$ in $K_{n,m})$ also
	increases from $2$ to $3$. Accordingly, the total change becomes as $%
	T=-(\frac{2(n-1)+2(m-1)+3}{8})$. In addition, for each vertex $v_{j},$
	the closeness value is $C(v_{j})=\frac{2(m+n)+mn-1}{4}.$ Therefore, the
	residual closeness value of the middle complete bipartite can be obtained as%
	\[
	R(M(K_{n,m}))=C(M(K_{n,m}))-2C(v_{j})+T. 
	\]%
	Using the result of Theorem \ref{C(M(Kn,m))}, we get\\
	\[R(M(K_{n,m}))=\frac{(m+n)(m+n-11)+mn(6(m+n)+2mn)+5}{8}.\] 	
\end{proof}

\section{Some General Results}

In this section, we will provide general results regarding the closeness, vertex and link resual closeness value of the middle graph. The results obtained will establish a relationship between the closeness value of the middle graph of $G$ and the closeness value of $G$. Additionally, some results found in the previous section will be necessary for the derived bounds.

\begin{theorem}
Let $G$ be a connected graph. Then
\[
\frac{C(G)}{2} + 6n - 14 + \frac{1}{2^{n-3}} \leq C(M(G)) \leq \frac{C(G)}{2} + \frac{3n^2 + 5n - 14}{8}
\]
where $C(G)$ is the closeness value of $G.$
\end{theorem}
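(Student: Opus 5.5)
The plan is to split $V(M(G))=\{u_1,\dots,u_n\}\cup\{v_e : e\in E(G)\}$ as in the proofs of Section 2, and write
\[
C(M(G))=C(u\sim u)+C(v\sim v)+2C(u\sim v).
\]
Three distance identities in $M(G)$ drive everything.
\begin{enumerate}
\item For original vertices, $d_{M(G)}(u_i,u_j)=d_G(u_i,u_j)+1$. A geodesic leaves $u_i$ through an incident edge-vertex, travels through edge-vertices, and enters $u_j$. Hence $C(u\sim u)=\sum_{i<j}2^{-d_G(u_i,u_j)-1}=\frac{C(G)}{2}$, which is exactly the $\frac{C(G)}{2}$ term on both sides.
\item Edge-vertices induce $L(G)$, and shortest paths between them never profit from passing through a $u_i$. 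Hence $C(v\sim v)=C(L(G))$, so Theorem \ref{line} and its special cases become available.
\item For a vertex $u$ and an edge $e=xy$, $d_{M(G)}(u,v_e)=1+\min\{d_G(u,x),d_G(u,y)\}$.
\end{enumerate}
Thus the whole statement reduces to bounding $C(L(G))+2C(u\sim v)$ from below and above.

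For the lower bound I would compare with the path. $L(G)$ is a connected graph on $|E(G)|\ge n-1$ vertices. Since the path minimises closeness among connected graphs of a given order (each vertex has at least one vertex at every distance up to its eccentricity), $C(L(G))\ge C(P_{n-1})=2n-6+2^{3-n}$ by Theorem \ref{zeynep}.

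For $C(u\sim v)$ I would fix $u$ and a BFS tree of $G$ rooted at $u$. Every edge of that tree has a well-defined distance level, and each level $k\le ec(u)$ contains at least one tree edge. Hence the contribution of $u$ is at least that of a path end, and summing over $u$ as in the proof of Theorem \ref{C1} gives
\[
2C(u\sim v)\ \ge\ 4\sum_{i=1}^{n-1}\frac{n-i}{2^{i}}=4(n-2)+2^{3-n}.
\]
Adding the two estimates yields $6n-14+2^{4-n}$, which is at least the claimed $6n-14+2^{3-n}$. Equality-type behaviour is exactly the middle path, in agreement with Theorem \ref{C1}.

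For the upper bound I would use the opposite extreme: every distance is at least $1$, and the number of pairs at distance $1$ is controlled by degrees. In $L(G)$, $v_e$ has $\deg x+\deg y-2$ neighbours; in $C(u\sim v)$, $u$ is adjacent to $\deg u$ edge-vertices. All other pairs would be bounded by the distance-$2$ weight $\frac14$, or the distance-$3$ weight $\frac18$ where forced. The resulting expression would then be maximised over connected graphs on $n$ vertices, guided by the extremal computations for $M(S_{1,n})$ and $M(K_{n,m})$.

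\textbf{Main obstacle.} The upper bound is the step I expect to fail in the form stated. For $S_{1,n}$, Theorems \ref{zeynep} and \ref{M(S)} give $C(M(S_{1,n}))-\frac{C(S_{1,n})}{2}=n^2+n$, which exceeds $\frac{3n^2+5n-14}{8}$. Dense graphs such as $M(K_n)$ give quartic growth. So before attempting the upper bound I would first fix what $n$ denotes (number of vertices versus edges) and what extra hypothesis is intended, for instance bounded degree, or trees with $n$ counting edges. Only then would I run the degree-counting maximisation, checking it against these examples.
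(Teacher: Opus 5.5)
Your diagnosis of the upper bound is correct, and it is the decisive point. The right-hand inequality is false, so the statement as written cannot be proved, and declining to prove it is the right outcome.

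Your star computation checks out. With $|V(S_{1,n})|=n+1$ the claimed bound is $\frac{C(S_{1,n})}{2}+\frac{3n^2+11n-6}{8}$. On the other hand, $C(M(S_{1,n}))-\frac{C(S_{1,n})}{2}=n^2+n$, and $n^2+n-\frac{3n^2+11n-6}{8}=\frac{5n^2-3n+6}{8}>0$. Reading $n$ as the number of edges does not rescue it, since the star then gives $n^2+n>\frac{3n^2+5n-14}{8}$.

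Even the path breaks the bound. By Theorem~\ref{C1}, $C(M(P_n))-\frac{C(P_n)}{2}=6n-14+2^{4-n}$. Moreover, for $3\le n\le 11$ the claimed lower excess $6n-14+2^{3-n}$ already exceeds the claimed upper excess $\frac{3n^2+5n-14}{8}$. So for those orders the two inequalities contradict each other for every connected $G$.

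The paper's argument for the upper bound rests on two unsupported estimates, justified only by ``the strongest adjacency relationships arise from the complete graph'': $C(V(G)\sim V(E(G)))\le\frac{n-1}{2}+\frac{(n-1)(n-2)}{8}$ and $C(V(E(G))\sim V(E(G)))\le (n-2)+\frac{n^2-5n+6}{8}$. For $K_n$, with its $n(n-1)/2$ edge-vertices, these two sums actually grow like $n^3/8$ and $n^4/16$.

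The lower bound is true, and your route is the paper's: compare $C(u\sim v)$ with $C(P_n)$ and $C(L(G))$ with $C(P_{n-1})$. However, the step you give for $C(u\sim v)$ does not deliver the total you claim. ``The contribution of $u$ is at least that of a path end'' means at least $\sum_{k=1}^{n-1}2^{-k}=1-2^{1-n}$. Summing over the $n$ vertices yields only $C(u\sim v)\ge n(1-2^{1-n})$, which is smaller than $C(P_n)=2n-4+2^{2-n}$ for $n\ge 3$ and too weak for the $6n$ term.

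To repair this, use the BFS tree more precisely. Each $y\ne u$ has a parent $x$ with $d_G(u,x)=d_G(u,y)-1$, so $d_{M(G)}(u,v_{xy})=d_G(u,y)$. Hence the contribution of $u$ is at least $\sum_{y\ne u}2^{-d_G(u,y)}$, and summing over $u$ gives $C(u\sim v)\ge C(G)$, with equality exactly for trees (cf. Theorem~\ref{tree}).

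You then need path-minimality of closeness among connected graphs of order $m$. This also does not follow from your per-vertex remark alone, which only gives $C(G)\ge m(1-2^{1-m})$. Prove it by induction, deleting a non-cut vertex $w$. Since distances in $G-w$ can only grow, $C(G)\ge C(G-w)+2\sum_{y\ne w}2^{-d_G(w,y)}\ge C(P_{m-1})+2(1-2^{1-m})=C(P_m)$. The middle inequality is exactly your ``one vertex at each distance level'' observation.

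Since $C(P_m)$ is increasing in $m$ and $|E(G)|\ge n-1$, this gives $C(M(G))\ge\frac{C(G)}{2}+2C(P_n)+C(P_{n-1})=\frac{C(G)}{2}+6n-14+2^{4-n}$, as you intended.
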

\begin{proof}
	According to the structure of the middle graph, let the vertex set be
	expressed as $V(G)\cup V(E(G))$. To determine the closeness value, we will
	consider the sum of three components: the closeness value of the vertices
	within $V(G)$, denoted by $C(V(G)\sim V(G)),$ twice the closeness value
	between the vertices in $V(G)$ and the vertices $V(E(G))$ added to form the	middle graph, denoted by $C(V(G)\sim V(E(G)))$, and the total closeness value among the vertices in $V(E(G))$, denoted by $C(V(E(G))\sim V(E(G)))$ themselves: 
	\begin{align*}
	C(M(G))&=C(V(G)\sim V(G))+2C(V(G)\sim V(E(G)))\\
    &+C(V(E(G))\sim V(E(G))). 
	\end{align*}
	Due to the definition of the middle graph, $C(V(G)\sim V(G))$ $=\frac{C(G)}{2%
	}$. In a connected graph, every edge has at least one edge neighbor.
	Accordingly, the weakest adjacency relationships for edges and vertices
	arise from the path structure. Thus, $C(V(G)\sim V(E(G)))\geq $ $C(P_{n})$
	and $C(V(E(G))\sim V(E(G)))\geq $ $C(P_{n-1})$. We derive the following
	lower bound as: 
	\[
	\frac{C(G)}{2}+2C(P_{n})+C(P_{n-1})\leq C(M(G)). 
	\]%
	Using the value $C(P_{n})=2n-4+\frac{1}{2^{n-1}}$ from \cite{Dangalchev}%
	\[
	\frac{C(G)}{2}+6n-14+\frac{1}{2^{n-3}}\leq C(M(G)). 
	\]%
	In addition, the strongest adjacency relationships for edges and vertices
	arise from the complete graph structure. Thus, it can be obtained that $%
	C(V(G)\sim V(E(G)))\leq \frac{n-1}{2}+\frac{(n-1)(n-2)}{2^{3}}$ and $%
	C(V(E(G))\sim V(E(G)))\leq (n-2)+\frac{n^{2}-5n+6}{2^{3}}$. We derive the
	following upper bound as: 
	\[
	C(M(G))\leq \frac{C(G)}{2}+\frac{3n^{2}+5n-14}{8}. 
	\]%
	From upper and lower bounds, the result is obtained as follows:%
	\[
	\frac{C(G)}{2}+6n-14+\frac{1}{2^{n-3}}\leq C(M(G))\leq \frac{C(G)}{2}+\frac{%
		3n^{2}+5n-14}{8}. 
	\]
\end{proof}
\begin{theorem}
	\label{regular}Let $G$ be $r$-regular connected graph with $n$ vertices.
	Then,%
	\[
	C(M(C_{n}))\leq C(M(G))\leq \frac{C(G)}{2}+nr(\frac{4n+4r+nr+2}{16})
	\]%
	where $C(M(C_{n}))$ is obtained Theorem \ref{M(Cn)}.
\end{theorem}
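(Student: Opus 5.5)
We follow the decomposition used in the previous theorem and in the proofs of Section 2, writing
\[
C(M(G))=C(V(G)\sim V(G))+2C(V(G)\sim V(E(G)))+C(V(E(G))\sim V(E(G))).
\]
For an $r$-regular graph $G$ on $n$ vertices we have $|E(G)|=nr/2$. In $M(G)$ the distance between two original vertices is $d_G+1$, so $C(V(G)\sim V(G))=\frac{C(G)}{2}$ exactly, just as in the previous theorem. The two inequalities are then obtained by bounding the remaining two terms from above and from below separately.

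\emph{Upper bound.} We bound distances in $M(G)$ from below by the smallest values the middle-graph structure allows. Each original vertex $u$ is adjacent to exactly its $r$ incident edge-vertices. Every other edge-vertex is at distance at least $2$ from $u$. This gives $C(u\sim V(E(G)))\le \frac{r}{2}+\frac{nr/2-r}{4}$, and summing over the $n$ vertices bounds $C(V(G)\sim V(E(G)))$. Each edge-vertex $e=xy$ has exactly $2(r-1)$ neighbours among the edge-vertices, namely the edges sharing $x$ or $y$. All other edge-vertices are at distance at least $2$. Hence $C(e\sim V(E(G)))\le \frac{2(r-1)}{2}+\frac{nr/2-1-2(r-1)}{4}$, and we sum over the $nr/2$ edge-vertices. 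Adding the two contributions (with the factor $2$ on the mixed term) to $\frac{C(G)}{2}$ and simplifying should give exactly $nr\frac{4n+4r+nr+2}{16}$. Checking that the algebra lands on this closed form is routine, though fiddly. I expect it can be made to match by this counting, in the same way the $K_n$ case was treated in Section 2.

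\emph{Lower bound.} Here we compare with the case $r=2$. The only connected $2$-regular graph on $n$ vertices is $C_n$, so there equality holds trivially. For $r\ge 2$ we want to show that $M(G)$ has at least as many close pairs as $M(C_n)$. The key step is to show that each of the three components is at least the corresponding component of $M(C_n)$. For the first component, we need $C(G)\ge C(C_n)$. This holds because for a connected graph $G$ with $\delta\ge2$ and any vertex $v$, the number of vertices at distance exactly $k$ from $v$ is at least two for every $k<ec(v)$, except possibly the last layer. This dominates the distance profile of the cycle, which has exactly two vertices per layer. Since $2^{-d}$ is decreasing, the distance profile of $G$ is pointwise ``closer'' than that of $C_n$. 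For the other two components, $M(G)$ contains more vertices (there are $nr/2\ge n$ edge-vertices), and every term is nonnegative. Choosing a Hamiltonian-like or spanning-cycle-type comparison is problematic, because $G$ need not contain $C_n$. So instead we argue layer-by-layer in $M(G)$ from each vertex, exactly as for the original vertices. Every vertex of $M(G)$ has degree at least that of the corresponding vertex of $M(C_n)$, and each BFS layer until the last has at least as many vertices. We then sum over an injection from $V(M(C_n))$ into $V(M(G))$.

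The main obstacle is this lower-bound comparison. Regular graphs need not contain a Hamiltonian cycle, so we cannot simply delete edges to obtain $C_n$. The layer-counting argument must be done carefully: we must show that in $M(G)$ the BFS layers from any vertex are at least as large as those in $M(C_n)$ from a vertex of the same type. If that proves delicate, I would fall back on using that $M(G)$ has at least $n+nr/2\ge 2n$ vertices with minimum degree at least $r\ge 2$ and is $2$-connected-like locally. I would then compare its sum $\sum 2^{-d}$ against that of $M(C_n)$, which is the sparsest such configuration.
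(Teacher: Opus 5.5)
Your upper bound is the paper's argument. You also correct its slip: an original vertex has $r$ incident edge-vertices, not $n-1$. The algebra you left unchecked does close: $2\cdot\frac{nr(n+2)}{8}+\frac{nr(nr+4r-6)}{16}=\frac{nr(4n+4r+nr+2)}{16}$.

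The gap is the lower bound. The paper only asserts that $C_n$ is the ``weakest'' regular structure. The comparison lemma you propose to justify this is false, because for $r\ge 3$ a regular graph can have cut vertices and bridges. Take two copies of $K_4$, subdivide one edge in each by new vertices $w,w'$, and add the edge $ww'$. This is a connected cubic graph on $10$ vertices. Let $v$ be a vertex of the first copy not on the subdivided edge.
\begin{itemize}
\item In $G$ the distance layers from $v$ have sizes $3,1,1,2,2$. So a layer before $ec(v)$ can be a singleton, and even the partial sums fall below those of $C_{10}$ ($5<6$ at radius $3$).
\item In $M(G)$ the layer at distance $3$ from $v$ is $\{w,ww'\}$, of size $2$. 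In $M(C_{10})$ that layer has size $4$.
\end{itemize}
Hence neither your argument for $C(G)\ge C(C_n)$ nor the layer-by-layer injection works. The fact that $M(G)$ has more vertices does not by itself compare sums of $2^{-d}$, and the fallback you sketch is not an argument.

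What saves the inequality is the surplus in the first two layers, so compare per-vertex totals instead of layers.
\begin{itemize}
\item In $M(C_n)$, every vertex sees at most two original vertices and at most two edge-vertices at each distance. An original vertex $u$ has its original vertices at distance at least $2$, so $C(u)<2+1=3$. An edge-vertex $e$ has $C(e)<2+2=4$. Hence $C(M(C_n))<7n$.
\item If $r=2$, then $G=C_n$ and equality holds.
\item If $r\ge 3$, an original vertex $u$ of $M(G)$ has $r$ edge-neighbours, $r$ original vertices at distance $2$, and at least $r(r-1)/2$ edge-vertices at distance $2$. So $C(u)\ge \frac{3r}{4}+\frac{r(r-1)}{8}\ge 3$.
\item An edge-vertex of $M(G)$ has $2r$ neighbours, so $C(e)\ge r\ge 3$, and there are $nr/2\ge 3n/2$ of them.
\end{itemize}
Therefore $C(M(G))\ge 3n+\frac{9n}{2}>7n>C(M(C_n))$, which proves the lower bound, strictly when $r\ge 3$.
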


\begin{proof}
	Let $G$ be $r$-regular graph and vertex set of $M(G)$ is denoted by $%
	V(G)\cup V(E(G))$ where $V(G)=\{u_{i}:i\in \{1,...,n\}\}$ and $%
	V(E(G))=\{v_{j}:j\in \{1,...,\frac{nr}{2}\}\}.$To calculate the closeness
	value, we split the relationships between vertices as follows:
	
	\begin{itemize}
		\item The relationship among vertices $u_{i}$, $1\leq i\leq n$, total
		closeness value represented as $C(u\sim u)$.
		
		\item The relationship among vertices $v_{j}$, $1\leq j\leq \frac{nr}{2}$,
		total closeness value represented as $C(v\sim v)$.
		
		\item The relationship between vertices $u_{i}$ and $v_{j}$ and total
		closeness value represented as $C(u\sim v)$.
	\end{itemize}
	
	In this case, for the first bullet, the total closeness value is obtained as 
	$\frac{C(G)}{2}$ and, for the second bullet, each vertex $v_{j}$ has $2(r-1)$
	neighbour in the set of $\{v_{j^{\prime }}:$ $1\leq j^{\prime }\leq nr/2,$ $%
	j\neq j^{\prime }\}$ and the remaining vertices are at least at a distance
	of $2$. Based on this, $C(v\sim v)\leq \frac{nr(4r+nr-6)}{2^{4}}$, and for the third bullet, each vertex $u_{i}$ has $(n-1)$ neighbour in the
	set of $\{v_{j}:$ $1\leq j\leq n(n-1)/2\}$ and the remaining vertices are at
	least at a distance of $2$. Based on this, $C(u\sim v)\leq \frac{%
		nr(n+2)}{2^{3}}.$ From this, if we derive an upper bound for the closeness
	value of $C(M(G))$, is:%
	\begin{eqnarray*}
		C(M(G)) &=&C(u\sim u)+C(v\sim v)+2C(u\sim v) \\
		&\leq &\frac{C(G)}{2}+nr(\frac{4n+4r+nr+2}{16}).
	\end{eqnarray*}%
	The lower bound, on the other hand, is obtained from the weakest regular
	structure of relationships between vertices and edges, which is the cycle
	graph $C_{n}.$Thus, we have%
	\[
	C(M(C_{n}))\leq C(M(G))\leq \frac{C(G)}{2}+nr(\frac{4n+4r+nr+2}{16}) 
	\]%
	where $C(M(C_{n}))$ is obtained Theorem \ref{M(Cn)}.
\end{proof}

\begin{remark}
	\label{remKn}In the previous theorem, when $r=n-1$, the graph $K_{n}$ becomes
	a complete graph. In this case, the equality in the upper bound is achieved,
	and $C(M(K_{n}))$ represents the maximum value that can be obtained for a
	regular graph. However, for other regular structures, as the value of $r$
	changes, the upper bound will narrow. Therefore, providing a result that
	depends on a general $r$ value is more practical in terms of the upper bound.
\end{remark}

\begin{theorem}\label{tree}
	Let $G$ be a tree graph%
	\[
	C(M(G))=\frac{5}{2}C(G)+C(L(G)) 
	\]%
	where $C(L(G))$ is the closeness value of line graph of $G.$
\end{theorem}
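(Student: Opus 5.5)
We split the closeness of $M(G)$ by vertex type, as in the earlier proofs. Write $V(M(G))=V(G)\cup V(E(G))$ and
\[
C(M(G))=C(V(G)\sim V(G))+2C(V(G)\sim V(E(G)))+C(V(E(G))\sim V(E(G))),
\]
where the first and last terms sum over ordered pairs within each class, and the middle term sums once over pairs (vertex, edge-vertex). The plan is to identify each term with a known closeness. The target identity is
\[
\tfrac12 C(G) + 2C(G) + C(L(G)) = \tfrac52 C(G)+C(L(G)).
\]
As a sanity check, Theorem \ref{C1} together with Theorem \ref{line}(ii) gives $\frac52(2n-4+2^{2-n})+2n-6+2^{3-n}=7n-16+18/2^n$, so the identity holds for paths.

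\emph{Edge--edge term.} A shortest path in $M(G)$ between two edge-vertices never needs to pass through an original vertex. Any step $e\to u\to e'$ through an original vertex $u$ can be replaced by the direct edge $ee'$, since $e$ and $e'$ share $u$. Hence distances among edge-vertices in $M(G)$ equal distances in $L(G)$, and this term is exactly $C(L(G))$. This holds for any graph, not only trees.

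\emph{Vertex--vertex term.} Any path in $M(G)$ from $u$ to $u'$ must start and end with edge-vertices incident to $u$ and $u'$. A $u$--$u'$ geodesic of length $d$ in $G$ gives the path $u,e_1,\dots,e_d,u'$ of length $d+1$ in $M(G)$. Conversely, a path $u,e_1,\dots,e_k,u'$ through consecutive adjacent edges yields a walk in $G$ of length at most $k$, and intermediate original vertices can be shortcut as above. So $d_{M(G)}(u,u')=d_G(u,u')+1$, and this term equals $\sum 2^{-(d+1)}=C(G)/2$. This too holds for general $G$.

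\emph{Vertex--edge term.} This is the step where the tree hypothesis is essential, and the one I expect to need the most care. For a vertex $u$ and an edge $e=xy$, the same argument gives $d_{M(G)}(u,e)=1+\min\{d_G(u,x),d_G(u,y)\}$. Now root the tree $G$ at $u$. Each edge $e$ has a unique child endpoint $w\neq u$, and its parent endpoint is at distance $d_G(u,w)-1$. Hence $d_{M(G)}(u,e)=d_G(u,w)$. The map $e\mapsto w$ is a bijection from $E(G)$ onto $V(G)\setminus\{u\}$, so
\[
\sum_{e} 2^{-d_{M(G)}(u,e)}=\sum_{w\neq u}2^{-d_G(u,w)}=C(u).
\]
Summing over $u$ gives $C(V(G)\sim V(E(G)))=C(G)$.

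Adding the three terms gives $\frac12C(G)+2C(G)+C(L(G))=\frac52C(G)+C(L(G))$. The main obstacle is the bijection in the vertex--edge term: it needs acyclicity, because in a graph with cycles an edge may have both endpoints at the same distance from $u$, and the correspondence fails. The distance formulas for the other two terms are routine once the shortcut argument is written out.
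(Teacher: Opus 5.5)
Your proof is correct and uses the same decomposition as the paper: the vertex--vertex, edge--edge and vertex--edge contributions are $\frac{1}{2}C(G)$, $C(L(G))$ and $C(G)$, with the last counted twice. The paper only asserts these three identities, so your shortcut argument and your rooted-tree bijection sending each edge to its child endpoint supply the justification the paper leaves out.
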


\begin{proof}
	Let $G$ be a tree structure and vertex set of $M(G)$ is denoted by $V(G)\cup
	V(E(G))$ where $V(G)=\{u_{i}:i\in \{1,...,n\}\}$ and $V(E(G))=\{v_{j}:j\in
	\{1,...,n-1\}\}.$ In order to obrain closeness value of $C(M(G))$, the
	realtionship between vertices can be splitted as:
	
	\begin{itemize}
		\item The relationship among vertices $u_{i}$, $1\leq i\leq n$, total
		closeness value represented as $C(u\sim u)$.
		
		\item The relationship among vertices $v_{j}$, $1\leq j\leq n-1$, total
		closeness value represented as $C(v\sim v)$.
		
		\item The relationship between vertices $u_{i}$ and $v_{j}$ and total
		closeness value represented as $C(u\sim v)$.
	\end{itemize}
	
	Here, the value $C(u\sim u)=\frac{C(G)}{2}$ and $C(v\sim %
	v)=C(L(G))$ where $L(G)$ is the line graph of $G,$ and from the tree
	structure and definition of the middle graph $C(u\sim v)=C(G).$%
	Therefore, it is obtained that $C(M(G))=\frac{5}{2}C(G)+C(L(G)).$
\end{proof}

\begin{corollary}
	Let $G$ be a star-free tree graph with $n>3$ vertices, 
	\[
	C(M(G))\leq \frac{7}{2}C(G). 
	\]
\end{corollary}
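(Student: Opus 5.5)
The plan is to reduce the corollary to a comparison between a tree and its line graph, using Theorem~\ref{tree}. Since $G$ is a tree, Theorem~\ref{tree} gives $C(M(G))=\frac{5}{2}C(G)+C(L(G))$. So it suffices to prove
\[
C(L(G))\leq C(G)
\]
for every tree $G$ with $n>3$ vertices that is not a star. I read ``star-free'' as ``not isomorphic to $S_{1,n-1}$''. The star really must be excluded: by Theorem~\ref{line}(iii), $L(S_{1,n-1})=K_{n-1}$, and its closeness $\frac{(n-1)(n-2)}{2}$ exceeds $C(S_{1,n-1})$ once $n$ is large enough. As sanity checks I would verify the inequality on paths, where $C(L(P_n))=C(P_{n-1})<C(P_n)$ by Theorem~\ref{line}(ii), and on the smallest spider (a star $S_{1,3}$ with one leg subdivided), where a direct count gives $C(L(G))=5<6.5=C(G)$.

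To compare the two sums, I would root $G$ at a vertex $r$ and send each edge to its endpoint farther from $r$. This is a bijection $E(G)\to V(G)\setminus\{r\}$, writing $a_e$ for the image of $e$. For two edges $e,f$ there are two cases.
\begin{itemize}
\item[(i)] If $a_e$ is an ancestor of $a_f$ (or vice versa), the $e$--$f$ path in $L(G)$ runs along the tree path, and $d_{L}(e,f)=d_G(a_e,a_f)$.
\item[(ii)] If $a_e,a_f$ are incomparable, the tree path from $a_e$ to $a_f$ passes through both parents, so $d_L(e,f)=d_G(a_e,a_f)-1$.
\end{itemize}
Summing over ordered pairs then yields the exact identity
\[
C(G)-C(L(G)) \;=\; 2C_G(r)\;-\;\sum_{\substack{(e,f)\\ a_e,a_f \text{ incomparable}}} 2^{-d_G(a_e,a_f)},
\]
since each incomparable pair contributes $2^{-d+1}-2^{-d}=2^{-d}$ extra in $L(G)$.

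The main obstacle is choosing $r$ so that the right-hand side is nonnegative. My first attempt would be to take $r$ to be a leaf, or an endpoint of a longest path. Then all vertices hang below the unique neighbour $s$ of $r$, and incomparable pairs arise only from branchings strictly below $r$. I would then bound the incomparable sum branching-point by branching-point: each incomparable pair $(a_e,a_f)$ has a lowest common ancestor $w\neq r$ with $d_G(a_e,a_f)=d_G(a_e,w)+d_G(w,a_f)\geq 2$. I would charge that pair's weight against the terms $2^{-d_G(r,\cdot)}$ of $C_G(r)$, aiming for a geometric-series estimate (using (1)) that shows the charges fit within $2C_G(r)$.

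This is exactly where non-starness should enter. In a star rooted at a leaf, the centre's $n-2$ children form $\binom{n-2}{2}$ incomparable pairs at distance $2$, which overwhelm $C_G(r)$. Once the tree has diameter at least $3$, choosing $r$ at the end of a diametral path should spread the branching far enough from $r$ that the charging works.

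If this charging turns out too delicate, my fallback is induction on $n$ that deletes a leaf $v$ adjacent to a vertex $u$ of degree $2$. Such a leaf exists at the end of a longest path in a non-path tree, possibly after a separate argument for spiders. The fallback would compare the changes $2C_G(v)$ and $2C_{L}(vu)$ using the relation $C_L(vu)=2C_G(v)-1$, which comes from mapping each edge to its endpoint farther from $v$. Either way, once $C(L(G))\leq C(G)$ is established, substituting into Theorem~\ref{tree} gives $C(M(G))\leq\frac{7}{2}C(G)$.
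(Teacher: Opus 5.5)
\textbf{The gap.} Your reduction through Theorem~\ref{tree} to the inequality $C(L(G))\le C(G)$ is the same one the paper uses, and your rooted identity $C(G)-C(L(G))=2C_G(r)-\sum_{\mathrm{incomp}}2^{-d_G(a_e,a_f)}$ is correct. The problem is that the inequality you then need is \emph{false} for non-star trees. So neither the charging argument nor the induction fallback can be completed.

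\textbf{Counterexample.} Let $G$ be obtained from $S_{1,4}$ by attaching a path with $L\ge 2$ edges at the centre $c$. Then $n=L+5\ge 7$ and $G$ has diameter $L+1$. Root $G$ at the far end $r$ of the path. The only incomparable pairs are the $12$ ordered pairs of leaves at $c$, each at distance $2$, so the incomparable sum is $3$. On the other hand $C_G(r)=1+2^{-L}$. Hence
\[
C(G)-C(L(G))=2+2^{1-L}-3=2^{1-L}-1<0 ,
\]
and so $C(M(G))-\frac{7}{2}C(G)=1-2^{1-L}>0$.

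For $L=2$ you can check everything directly:
\begin{itemize}
\item $C(G)=\frac{25}{2}$;
\item $L(G)$ is $K_5$ with a pendant vertex, so $C(L(G))=13$;
\item a BFS count in the $13$-vertex graph $M(G)$ gives $C(M(G))=\frac{177}{4}>\frac{175}{4}=\frac{7}{2}C(G)$.
\end{itemize}

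\textbf{Why choosing $r$ cannot help.} The right-hand side of your identity equals $C(G)-C(L(G))$ for every choice of root, so choosing $r$ cannot change its sign. The real obstruction is a vertex of large degree $\Delta$. Its $\Delta-1$ children contribute $(\Delta-1)(\Delta-2)/4$ of incomparable weight at distance $2$, and $2C_G(r)$ need not cover this. In the brooms above, $2C_G(r)$ stays near $2$ while the leaf pairs contribute $3$. Diameter at least $3$ does nothing against this, since these brooms have arbitrarily large diameter.

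\textbf{Status of the statement.} Under your reading of ``star-free'' (not isomorphic to a star), the corollary is false and no proof exists. The paper's own proof has the same hole. It infers $C(T)\ge C(L(T))$ from Buckley's Wiener-index identity $W(T)=W(L(T))+\frac{n(n-1)}{2}$, but that identity concerns sums of distances and says nothing about sums of $2^{-d}$.

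The statement can only be saved with a stronger hypothesis. For example, read ``star-free'' as claw-free ($K_{1,3}$-free). Then the tree is a path, and Theorem~\ref{line}(ii) gives the bound at once, since $C(M(P_n))=\frac{7}{2}C(P_n)-2+2^{2-n}$.
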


\begin{proof}
	It is known from \cite{Buckley} that $%
	d(T)=d(L(T))+\left( 
	\begin{array}{c}
		n \\ 
		2%
	\end{array}%
	\right) .$ If the tree structure star-free than $C(T)\geq C(L(T)\dot{)}.$
	Thus, using previous result $C(M(G))\leq \frac{7}{2}C(G)$ is yield.
\end{proof}

\begin{remark}
	When comparing the closeness value of the star graph provided in Theorem \ref{zeynep} with the closeness value of the star line graph provided in Theorem \ref{line}, it is evident that $C(S_{1,n})\leq
	C(L(S_{1,n})) $ for number of vertices grater than $3$. In the star structure,
	since the edges are at a distance of $1$ from each other, the closeness
	value of the line graph reaches its maximum value. In addition, By applying the results of Theorem \ref{zeynep} (ii) and Theorem \ref{line} (iv) to the general context, Theorem \ref{M(S)}—focused on the specific case of a star graph, a special type of tree—both supports and illustrates the broader theoretical framework. This coherence across general and specific cases strengthens the validity and applicability of the Theorem \ref{tree}.
\end{remark}

\begin{theorem}
	Let $G$ be any graph then 
	\[
	R(M(G))\leq C(M(G-\{e\}))
	\]%
	where $G-\{e\}$ represent an edge removal of the graph $G.$
\end{theorem}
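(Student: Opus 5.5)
The plan is to realize the right-hand side exactly as one of the quantities $C_k$ over which $R(M(G))$ is minimized. Recall that $R(M(G))=\min_k\{C_k\}$, where $C_k$ is the closeness of the graph obtained from $M(G)$ by deleting the vertex $v_k$, with distances $d_{v_k}$ computed in that graph. So it is enough to find a single vertex of $M(G)$ whose deletion gives a graph isomorphic to (indeed equal to) $M(G-\{e\})$. The natural candidate is the vertex $v_e\in V(E(G))$ that represents the edge $e$ in the middle graph. We assume $G$ has at least one edge $e$; otherwise the right-hand side is not defined.

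The key step is to verify that $M(G)-v_e=M(G-\{e\})$, using the same labelling as in the earlier proofs. The vertex sets agree: both are $V(G)\cup (V(E(G))\setminus\{v_e\})$. For the edges, we check the two adjacency types in the definition of the middle graph.
\begin{enumerate}
\item Two edge-vertices $v_f,v_g$ with $f,g\neq e$ are adjacent in $M(G)$ exactly when $f$ and $g$ share an endpoint in $G$. This holds exactly when they share an endpoint in $G-\{e\}$, since deleting $e$ leaves all other edges and all vertices in place.
\item A vertex $u_i$ and an edge-vertex $v_f$ with $f\neq e$ are adjacent exactly when $u_i$ is incident to $f$. This is the same in $G$ and in $G-\{e\}$.
\end{enumerate}
The only edges of $M(G)$ that disappear when $v_e$ is deleted are those incident to $v_e$, and none of these appear in $M(G-\{e\})$. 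Hence the two graphs coincide, and the distances satisfy $d_{v_e}(x,y)=d_{M(G-\{e\})}(x,y)$ for all remaining vertices $x,y$.

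It then follows that $C_{v_e}=C(M(G-\{e\}))$. Note that $G-\{e\}$ may be disconnected, for example when $e$ is a bridge. This causes no difficulty because of Dangalchev's convention, already used in the path and star computations above: pairs in different components contribute $2^{-\infty}=0$, both in $C_{v_e}$ and in $C(M(G-\{e\}))$. Therefore
\[
R(M(G))=\min_k C_k\le C_{v_e}=C(M(G-\{e\})).
\]
Since $e$ is arbitrary, this in fact gives $R(M(G))\le\min_{e\in E(G)}C(M(G-\{e\}))$.

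The only delicate point I expect is making sure the identification of $M(G)-v_e$ with $M(G-\{e\})$ is exact rather than merely up to distances. Once the edge sets are compared type by type as above, the rest is immediate.
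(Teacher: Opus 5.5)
Your proof is correct and follows the same route as the paper: delete the vertex $v_e$ of $M(G)$ that represents $e$, identify $M(G)-v_e$ with $M(G-\{e\})$, and use that $R(M(G))$ is the minimum of $C_k$ over single-vertex deletions. Your type-by-type adjacency check and your remark on the case where $G-\{e\}$ is disconnected fill in details that the paper's one-line isomorphism claim leaves implicit.
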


\begin{proof}
	Let $G$ be any graoh  wih $n$ vertices, $m$ edges. By definition, to obtain
	the residual closeness value, a vertex is removed from the graph to achieve
	the smallest possible closeness value. Considering the middle graph
	structure, this value will be obtained either by removing a vertex from the
	set of vertices of the graph $\{u_{i}:i\in \{1,...,n\}\}$or from the set of
	vertices associated with the edges of the graph $\{v_{j}:i\in \{1,...,m\}\}$.
	Let us assume that the edge $e$ associated with vertex $v_{1}$ is removed
	from $M(G)$. The resulting structure is isomorphic to $M(G-e)$. However, by the definition of vertex residual closeness, we have; \\$R(M(G))\leq C(M(G-\{e\}))$.
\end{proof}
\section{Algorithm Analysis}
In this section, we present the pseudocode of the key parameters introduced in this paper, along with their algorithmic steps. The implementation of these algorithms has been verified using JavaScript.

This algorithm is designed to analyze key structural properties of a given graph. The input is the adjacency matrix of the graph, and based on this matrix, different computations are performed. 

The first step defines the Breadth-First Search (BFS) function. This method starts from a chosen node and explores the graph level by level, computing the minimum number of edges needed to reach every other node. In this way, BFS provides insight into the connectivity of the graph and the shortest path lengths from a specific starting point. Next, the algorithm applies the Floyd–Warshall algorithm. This dynamic programming approach calculates the shortest paths between all pairs of nodes. Initially, the adjacency matrix is used as the distance matrix, and then, by iteratively considering each node as an intermediate step, the distances are updated. The result is a complete map of the shortest distances between every pair of nodes in the graph. Using these distance values, the algorithm then computes Closeness Centrality. For each node, it sums the inverses of the distances to all other nodes. Nodes with higher closeness centrality are those that can reach others more efficiently, making them structurally important in terms of accessibility and influence within the network. Finally, the algorithm introduces Residual Closeness. In this step, each

node is removed from the graph one at a time, and closeness centrality is recalculated for the reduced network. The sum of centrality values is recorded for each removal, and the minimum result is taken. This allows the detection of nodes whose absence has the strongest negative impact on the overall centrality structure of the graph.

Overall, the algorithm provides essential tools for graph analysis: it identifies shortest paths, evaluates how close each node is to others, and measures the vulnerability of the network when nodes are removed. These insights help highlight the graph’s connectivity, its critical nodes, and its overall accessibility.

The complexity of the algorithm can be explained in simple terms. When Breadth First Search (BFS) is run from a single node, it requires $O(n^2)$ time since every possible neighbor relationship must be checked in the adjacency matrix. If BFS were to be executed from every node, this would rise to $O(n^3)$. The Floyd Warshall algorithm, which computes the shortest paths between all pairs of nodes, has a time complexity of $O(n^3)$, as it iterates through all possible triples of nodes.

Closeness centrality makes use of the results of Floyd Warshall and only adds about $O(n^2)$ extra operations, so its total cost is still dominated by the $O(n^3)$ of Floyd Warshall. Residual closeness, however, is the most computationally expensive part. Since it requires recalculating closeness centrality after removing each node, the Floyd Warshall procedure is effectively repeated $n$ times. This increases the overall time complexity to $O(n^4)$.
In short, while BFS and Floyd Warshall themselves are at most cubic in complexity, the inclusion of residual closeness makes the entire algorithm quartic, with $O(n^4)$ time, making it the heaviest component of the analysis.\\

\noindent
\textbf{Algorithm 1: Graph Analysis Suite (Simplified)}

\small
\begin{algorithmic}[1]
	\State \textbf{Input:} Adjacency matrix $M$
	\State \textbf{Constants:} $INF \gets$ a very large number
	\State $n \gets$ number of nodes in $M$
	\Function{BFS}{$M$, start}
	\State $dist \gets$ array of size $n$ filled with $INF$
	\State $queue \gets$ empty queue
	\State $dist[start] \gets 0$
	\State enqueue $start$ to $queue$
	\While{$queue$ is not empty}
	\State $current \gets$ dequeue from $queue$
	\For{$neighbor = 0$ to $n-1$}
	\If{$M[current][neighbor] > 0$ and $dist[neighbor] = INF$}
	\State $dist[neighbor] \gets dist[current] + 1$
	\State enqueue $neighbor$
	\EndIf
	\EndFor
	\EndWhile
	\State \Return $dist$
	\EndFunction
	
	\Function{FloydWarshall}{$M$}
	\State $dist \gets n \times n$ matrix filled with $INF$
	\For{$i = 0$ to $n-1$}
	\For{$j = 0$ to $n-1$}
	\If{$i = j$} \State $dist[i][j] \gets 0$
	\ElsIf{$M[i][j] > 0$} \State $dist[i][j] \gets M[i][j]$
	\EndIf
	\EndFor
	\EndFor
	\For{$k = 0$ to $n-1$}
	\For{$i = 0$ to $n-1$}
	\For{$j = 0$ to $n-1$}
	\If{$dist[i][k] \ne INF$ and $dist[k][j] \ne INF$}
	\State $dist[i][j] \gets \min(dist[i][j], dist[i][k] + dist[k][j])$
	\EndIf
	\EndFor
	\EndFor
	\EndFor
	\State \Return $dist$
	\EndFunction
	
	\Function{ClosenessCentrality}{$M$}
	\State $SP \gets$ \Call{FloydWarshall}{$M$}
	\State $closeness \gets$ zero array of size $n$
	\For{$i = 0$ to $n-1$}
	\State $sum \gets 0$
	\For{$j = 0$ to $n-1$}
	\If{$i \ne j$ and $SP[i][j] \ne INF$}
	\State $sum \gets sum + 2^{-SP[i][j]}$
	\EndIf
	\EndFor
	\State $closeness[i] \gets sum$
	\EndFor
	\State \Return $closeness$
	\EndFunction
	
	\Function{ResidualCloseness}{$M$}
	\State $ckSums \gets$ empty list
	\For{$k = 0$ to $n-1$}
	\State $tempM \gets$ copy of $M$
	\State set row $k$ and column $k$ of $tempM$ to $0$
	\State $c \gets$ \Call{ClosenessCentrality}{$tempM$}
	\State append $\sum c$ to $ckSums$
	\EndFor
	\State \Return $\min(ckSums)$
	\EndFunction
	
	\State $bfsMatrix \gets$ matrix of BFS distances for each node
	\State $floydMatrix \gets$ \Call{FloydWarshall}{$M$}
	\State $closeness \gets$ \Call{ClosenessCentrality}{$M$}
	\State $residual \gets$ \Call{ResidualCloseness}{$M$}
\end{algorithmic}
\section{Conclusion}
We investigated the vulnerability and robustness of networks through the lens of residual closeness. By applying these metrics to middle graphs derived from various base graphs, we achieved a deeper understanding of how structural failures impact communication efficiency in complex networks. The analytical results show that residual closeness is a powerful and sensitive metric that captures performance degradation more effectively than classical graph-theoretic measures, especially in cases where networks remain connected despite internal disruptions.

Furthermore, middle graphs offer a more intricate representation of network behavior by incorporating both vertex and edge interactions into a unified topology. This dual representation allowed us to detect subtle shifts in efficiency that would be otherwise overlooked in classical models. The comparative analysis of closeness and residual closeness in these enriched structures demonstrates that middle graphs are particularly well-suited for applications involving physical infrastructure or hybrid communication systems.

In addition, obtaining results related to closeness and vertex residual closeness under previously unexplored important graph structures or specific graph operations constitutes another promising research direction. Systematic investigation of such structures can help derive more general and abstract characterizations of how residual closeness behaves across various graph classes.

\end{document}